\documentclass[12pt]{article}
\usepackage{amsmath}
\usepackage{inputenc}
\usepackage{graphicx}
\usepackage{mathtools}
\usepackage{amsthm}
\usepackage[hyperfootnotes=false]{hyperref}
\usepackage{epsfig,epic,eepic,units}
\usepackage{hyperref}
\usepackage{url}
\usepackage{longtable}
\usepackage{mathrsfs}
\usepackage{multirow}
\usepackage{bigstrut}
\usepackage{amssymb}
\usepackage{caption}
\usepackage{setspace}
\usepackage[margin=1.25in]{geometry}
\usepackage [english]{babel}
\usepackage [autostyle, english = american]{csquotes}
\MakeOuterQuote{"}

\newtheorem{theorem}{Theorem}

\doublespacing
\begin{document}
\begin{singlespacing}
\title{\textbf{Exact Replication of the Best Rebalancing Rule in Hindsight\footnote{I thank anonymous reviewers for their time, effort, and valuable comments that improved the paper.}}}
\author{Alex Garivaltis\footnote{Assistant Professor of Economics, Northern Illinois University, 514 Zulauf Hall, DeKalb IL 60115.  E-mail:  agarivaltis1@niu.edu.  ORCID:  0000-0003-0944-8517.}}
\maketitle
\abstract{This paper prices and replicates the financial derivative whose payoff at $T$ is the wealth that would have accrued to a $\$1$ deposit into the best continuously-rebalanced portfolio (or fixed-fraction betting scheme) determined in hindsight.  For the single-stock Black-Scholes market, Ordentlich and Cover (1998) only priced this derivative at time-0, giving $C_0=1+\sigma\sqrt{T/(2\pi)}$.  Of course, the general time-$t$ price is \textit{not} equal to $1+\sigma\sqrt{(T-t)/(2\pi)}$.
\par
I complete the Ordentlich-Cover (1998) analysis by deriving the price at any time $t$.  By contrast, I also study the more natural case of the best \textit{levered} rebalancing rule in hindsight.  This yields $C(S,t)=\sqrt{T/t}\cdot\,\exp\{rt+\sigma^2b(S,t)^2\cdot t/2\}$, where $b(S,t)$ is the best rebalancing rule in hindsight over the observed history $[0,t]$.  I show that the replicating strategy amounts to  betting the fraction $b(S,t)$ of wealth on the stock over the interval $[t,t+dt].$  This fact holds for the general market with $n$ correlated stocks in geometric Brownian motion:  we get $C(S,t)=(T/t)^{n/2}\exp(rt+b'\Sigma b\cdot t/2)$, where $\Sigma$ is the covariance of instantaneous returns per unit time.  This result matches the $\mathcal{O}(T^{n/2})$ ``cost of universality'' derived by Cover in his ``universal portfolio theory'' (1986, 1991, 1996, 1998), which super-replicates the same derivative in discrete-time.  The replicating strategy compounds its money at the same asymptotic rate as the best levered rebalancing rule in hindsight, thereby beating the market asymptotically.  Naturally enough, we find that the American-style version of Cover's Derivative is never exercised early in equilibrium.
\newline
\par
\textbf{Keywords:}  Exotic Options, Lookback Options, Correlation Options, Continuously-Rebalanced Portfolios, Kelly Criterion, Universal Portfolios, Dynamic Replication}
\par
\textbf{JEL Classification:}  C44, D53, D81, G11, G13
\end{singlespacing}
\titlepage
\section{Introduction}
The exotic option literature has several examples (Wilmott 1998) of derivatives with ``lookback'' or "no-regret" features.  For example, a floating-strike lookback call allows its owner to look back at the price history of a given stock, buy a share at the realized minimum $m:=\underset{1\leq t\leq T}{\min}\,\,S_t$, and sell it at the terminal price $S_T$.  Similarly, a fixed-strike lookback call allows its owner to buy one share at a fixed price $K$, and sell it at the historical maximum $M:=\underset{1\leq t\leq T}{\max}\,\,S_t$.  
\par
This paper prices and replicates a markedly different type of lookback option, whose payoff is equal to the final wealth that would have accrued to a $\$1$ deposit into the best continuous rebalancing rule (or fixed-fraction betting scheme) determined in hindsight.  This contingent claim has been studied by Cover and his collaborators (1986, 1991, 1996, 1998) who used it as a performance benchmark for discrete-time portfolio selection algorithms.  Ordentlich and Cover's important (1998) paper (on the ``max-min universal portfolio'') super-replicates this derivative in discrete-time. 
\par
In the context of one underlying stock, a rebalancing rule is a fixed-fraction betting scheme that continuously maintains some fraction $b\in(-\infty,+\infty)$ of wealth in the stock and keeps the rest in cash.  The portfolio is held for the differential time interval $[t,t+dt]$, at which point it is rebalanced to the target allocation.  If $b>1$, the scheme uses margin loans, but continuously maintains a fixed debt-to-assets ratio of $1-1/b$.  Say, for $b=2$ the scheme would keep a $50\%$ loan-to-value ratio at all times.  Thus, when the stock rises, the trader instantly adjusts by borrowing additional cash against his new wealth.  Similarly, on a downtick he will de-lever himself by selling a precise amount of the stock.  For example, using $b=2$ on the S\&P 500 index from January 2012 through August 2018 would have, under monthly rebalancing, compounded one's money at $31.8\%$ annually, as compared to buying and holding the index ($b=1$), which would have yielded $15.6\%$ annually.  This is illustrated in Figure \ref{fig:index}. 
\begin{figure}[t]
\centering
\includegraphics[width=350px]{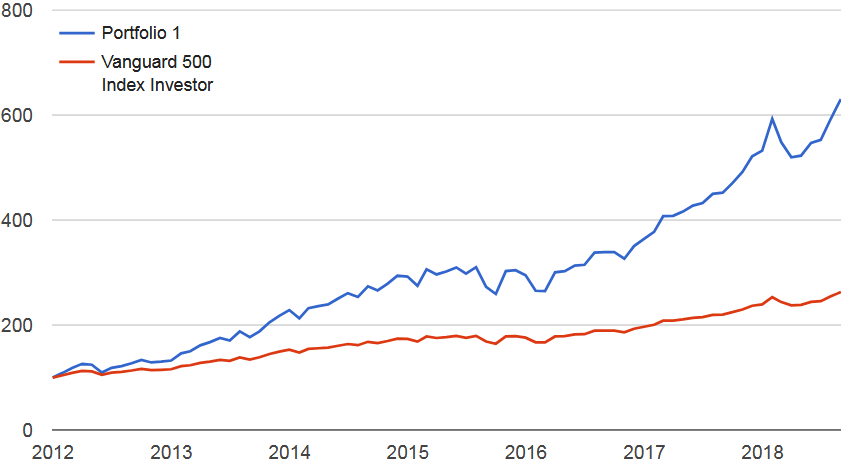}
\caption{\sc $b=2$ for Vanguard S\&P 500 index ETF under monthly rebalancing, Jan 2012-Aug 2018.}
\label{fig:index}
\end{figure}
\par
By contrast to the constant leveraged (2x) exposure discussed above, rebalancing rules $b\in(0,1)$ amount to ``volatility harvesting'' strategies (Luenberger 1998) that ``live off the fluctuations'' of the underlying.  Such rules are mechanical schemes for ``buying the dips and selling the rips,'' and they profit from mean-reversion in cyclical or ``sideways'' markets.  For example, using $b=0.5$ for shares of Advanced Micro Devices (AMD) with monthly rebalancing over the author's lifetime (April 1986 through August 2018), the trader would have compounded at $7.79\%$ per year, compared to $1.77\%$ for $b=1$.  This is illustrated in Figure \ref{fig:amd}.
\begin{figure}[t]
\centering
\includegraphics[width=350px]{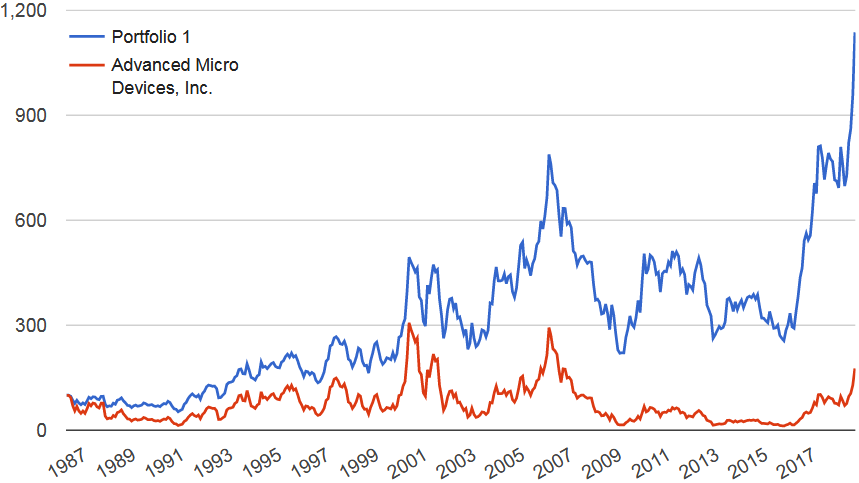}
\caption{\sc $b=0.5$ for AMD shares under monthly rebalancing, Apr 1986-Aug 2018.}
\label{fig:amd}
\end{figure}
\par
These examples make it clear that the best  rebalancing rule in hindsight will handily outperform the underlying over long periods.  For an underlying whose price failed to rise during the lookback period, the best rebalancing rule in hindsight can outperform by holding all cash ($b=0$) or by shorting the stock ($b<0$).  Inevitably, one lives to regret the fact that he did not use the best rebalancing rule in hindsight.  In 1986, no one could have reliably predicted that $b=0.5$ would beat AMD by 6 percent a year.  But (at least in the Black-Scholes world) it \textit{was} possible to delta-hedge the final wealth of the best continuous rebalancing rule in hindsight.  Such is the business of this paper.
\subsection{Contribution}
Ordentlich and Cover (1998) priced this derivative at time-0, for a single underlying with unlevered hindsight optimization.  The last result in their paper is the formula $C_0=1+\sigma\sqrt{T/(2\pi)}$, where $T$ is the horizon and $\sigma$ is the volatility.  Of course, the general time-$t$ price is \textit{not} equal to $1+\sigma\sqrt{(T-t)/(2\pi)}$.  Accordingly, this paper completes the Ordentlich-Cover (1998) analysis, deriving the eponymous Cost of Achieving the Best (constant-rebalanced) Portfolio in Hindsight at any time $t$, for levered hindsight optimization over any number of correlated stocks in geometric Brownian motion.  When leverage is allowed in the hindsight optimization, replication becomes especially simple.  At time $t$, we just look back at the observed history $[0,t]$ and compute the best (currently known) rebalancing rule in hindsight, here denoted $b(S,t)$.  We then bet the fraction $b(S,t)$ of wealth on the stock over $[t,t+dt]$.  This is equivalent to holding $\Delta(S,t):=b(S,t)C(S,t)/S$ shares of the stock in state $(S,t)$.  The replicating strategy serves to translate Ordentlich and Cover's (1998) ``max-min universal portfolio'' into continuous time.  Thus, the present paper does for Ordentlich and Cover (1998) what Jamshidian (1992) did for Cover's original (1991) performance-weighted universal portfolio.
\subsection{Related Literature}
Cover's universal portfolios (and the corresponding \textit{individual sequence approach} to investment) have generated a thriving literature in mathematical finance, computer science, and machine learning.  Parkes and Huberman (2001) study a cooperative multiagent search model of portfolio selection.  Barron and Yu (2003) supply investment strategies that are universal with respect to constant-rebalanced option portfolios.  Iyengar (2005) analyzes universal growth-optimal investment in a discrete-time market with transaction costs.  Stoltz and Lugosi (2005) extend the concept of internal regret to the on-line portfolio selection problem.  They develop sequential investment strategies that minimize cumulative internal regret under model uncertainty.  DeMarzo, Kremer, and Mansour (2006) use on-line trading algorithms and regret minimization to derive robust bounds for option prices.  Gy\"{o}rfi, Lugosi, and Udina (2006) study kernel-based sequential investment strategies that guarantee optimal capital growth for markets with ergodic stationary  return processes.  Kozat and Singer (2011) deal with semiconstant-rebalanced portfolios that rebalance only at selected points in time, and thus may avoid rebalancing if the prospective benefits are outweighed by transaction costs.  They exhibit on-line investment strategies that asymptotically achieve the wealth of the best semiconstant-rebalanced portfolio for the realized sequence of asset returns.  Rujeerapaiboon, Kuhn, and Wiesemann (2015) use robust optimization techniques to build fixed-mix strategies offering performance guarantees that are similar to the growth-optimal portfolio.
\par
Portfolio rebalancing is a key tenet of Fernholz's (2002) stochastic portfolio theory.  Wong (2015) extends Cover's universal portfolio to the nonparametric family of functionally generated portfolios in stochastic portfolio theory.  Cuchiero, Schachermayer, and Wong (2016) show that, under appropriate hypotheses, the asymptotic compound-growth rate of Cover's universal portfolio coincides with that of stochastic portfolio theory and the num\'eraire portfolio.  On a more practical basis, rebalancing is a perennially important aspect of tactical asset allocation.  For instance, Israelov and Tummala (2018) study short option overlays that can be used to hedge one's exposure to allocation drift between planned rebalances.  Gort and Burgener (2014) describe and backtest option selling techniques that serve to rebalance institutional investors' asset exposures to predefined targets.  An AQR White Paper by Ilmanen and Maloney (2015) discusses the key considerations for investors deciding on whether and how to rebalance their portfolios.
\section{One Underlying}
\subsection{Payoff Computation}
For simplicity, we start with a single underlying stock whose price $S_t$ follows the geometric Brownian motion
\begin{equation}
\frac{dS_t}{S_t}=\mu\,dt+\sigma\,dW_t,
\end{equation}where $\mu$ is the drift, $\sigma$ is the volatility, and $W_t$ is a standard Brownian motion.  There is a risk-free bond whose price $B_t:=e^{rt}$ follows
\begin{equation}
\frac{dB_t}{B_t}=r\,dt,
\end{equation}  where $r$ is the continuously-compounded interest rate.  We consider constant rebalancing rules, or fixed-fraction betting schemes, that ``bet'' the fraction $b\in(-\infty,+\infty)$ of wealth on the stock over the interval $[t,t+dt]$.  Assume that the gambler starts with $\$1$, and let $V_t=V_t(b)$ denote his wealth at $t$.  He thus owns $bV_t/S_t$ shares of the stock at $t$, and has the remaining $(1-b)V_t$ dollars invested in bonds.  The gambler's wealth evolves according to
\begin{equation}
\frac{dV_t(b)}{V_t(b)}=b\,\frac{dS_t}{S_t}+(1-b)\frac{dB_t}{B_t}=[r+b(\mu-r)]dt+b\sigma\, dW_t.
\end{equation}
Since $V_t(b)$ is a geometric Brownian motion, we have 
\begin{equation}\label{fortune}
V_t(b)=\exp\{[r+(\mu-r)b-\sigma^2b^2/2]t+b\sigma W_t\}.
\end{equation}In the formula
\begin{equation}
S_t=S_0\exp\{(\mu-\sigma^2/2)t+\sigma W_t\},
\end{equation}we can solve for $\sigma W_t$ in terms of $S_t$, and substitute the resulting expression into (\ref{fortune}).  This yields
\begin{equation}
V_t(b)=\exp\{(r-\sigma^2b^2/2)t+b[\log(S_t/S_0)-(r-\sigma^2/2)t]\}.  
\end{equation}  Thus, we note that $V_t(b)$ can be calculated without any explicit reference to the drift parameter $\mu$. The trader's wealth is \textit{Markovian}: it depends only on the current state $(S_t,t)$. 
\par
To find the best rebalancing rule in hindsight over $[0,t]$, we maximize $V_t(b)$ with respect to $b$.  Since the exponent is quadratic in $b$, the best rebalancing rule in hindsight over $[0,t]$ is
\begin{equation}
\boxed{b(S_t,t):=\frac{\log(S_t/S_0)-(r-\sigma^2/2)t}{\sigma^2t}}.
\end{equation}  If we write $\hat{\mu}(S,t):=\log(S/S_0)/t+\sigma^2/2$, we get the expression
\begin{equation}
\boxed{b(S,t)=\frac{\hat{\mu}(S,t)-r}{\sigma^2}}.
\end{equation}  Let $V_t^*:=\max\limits_{b\in\mathbb{R}}V_t(b)$ denote the final wealth of the best levered rebalancing rule in hindsight over $[0,t]$.  Then
\begin{equation}
\boxed{V_t^*=\exp\{rt+\sigma^2b(S,t)^2\cdot t/2\}}.
\end{equation}
Figure \ref{fig:payoff} plots this payoff for different volatilities, assuming a risk-free rate of $r:=0$ over a horizon of $T:=5$ years.
\begin{figure}[t]
\centering
\includegraphics[width=430px]{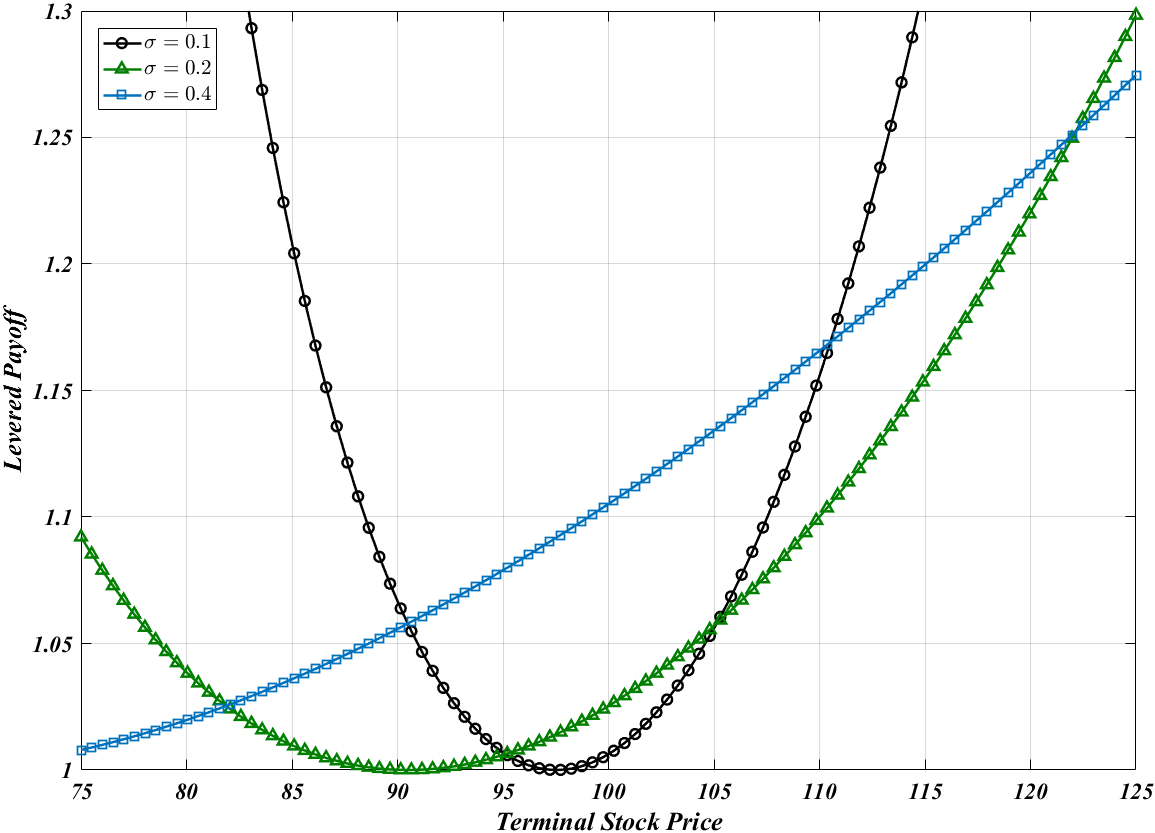}
\caption{\sc The payoff of Cover's Derivative for levered hindsight optimization, $T:=5, S_0:=100, r:=0.$}
\label{fig:payoff}
\end{figure}
In Ordentlich and Cover (1998), the hindsight optimization is over unlevered rebalancing rules $b\in[0,1]$, and in that context, the best unlevered rebalancing rule in hindsight is $b^u(S,t):=\max\{\min\{b(S,t),1\},0\}$.  Thus, they use the payoff
\begin{align}\boxed{
V_t^*:=\max_{0\leq b\leq 1}V_t(b)=\left\{ \begin{array}{cc} 
                e^{rt} & \hspace{5mm} \text{if}\,\,b(S,t)\leq0\\
                \exp\{rt+\sigma^2b(S,t)^2\cdot t/2\} & \hspace{5mm} \text{if}\,\,0\leq b(S,t)\leq1\\
                S_t/S_0 & \text{if}\,\,b(S,t)\geq1 \\
                \end{array} \right.
}.\end{align}
Figure \ref{fig:unlev} plots the unlevered payoff for different volatilities, assuming a risk-free rate of $r:=0$ over a horizon of $T:=2$ years.
\begin{figure}[t]
\centering
\includegraphics[width=400px]{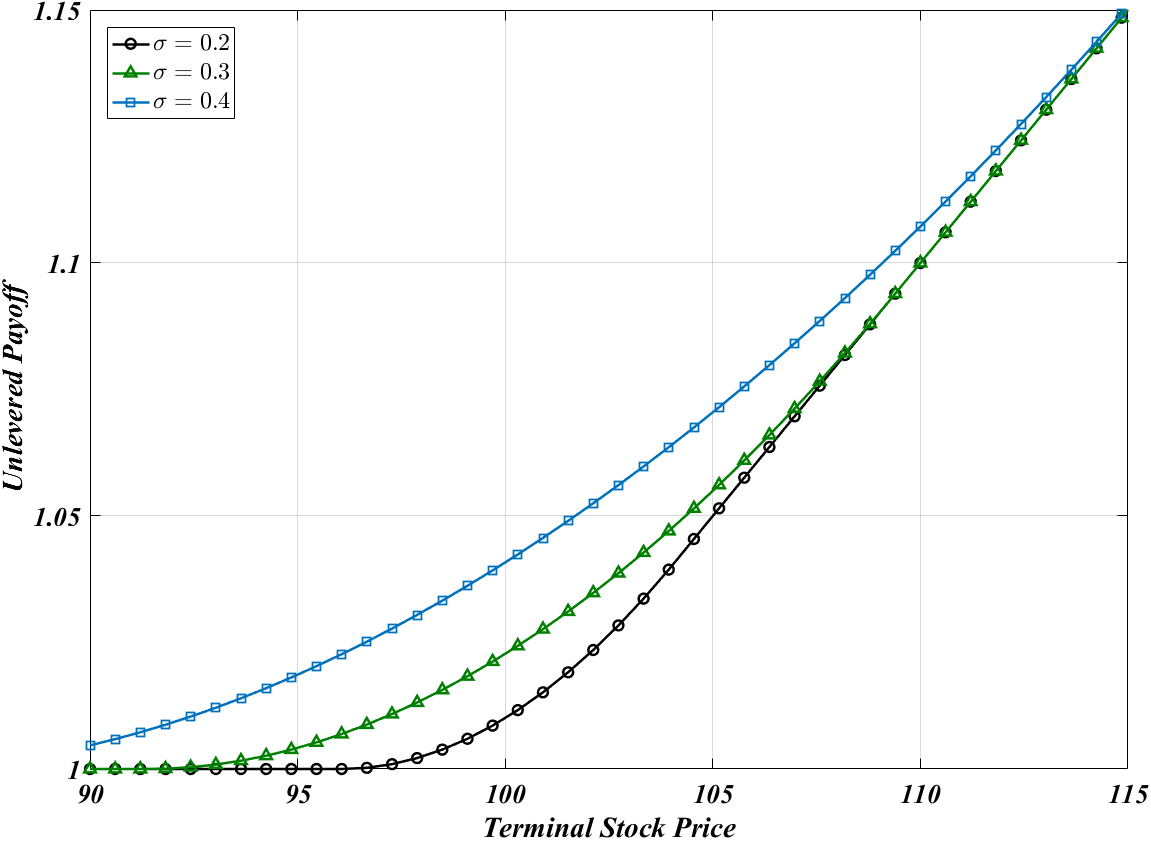}
\caption{\sc The payoff of Cover's Derivative for unlevered hindsight optimization, $T:=2, S_0:=100, r:=0.$}
\label{fig:unlev}
\end{figure}
Consider the European-style derivative (``hindsight allocation option'') whose payoff at $T$ is $V_T^*(S_T)$.
Let $\mathbb{Q}$ denote the equivalent martingale measure.  Ordentlich and Cover (1998) computed the expected present value \begin{equation}
\boxed{C_0:=e^{-rT}\mathbb{E}_0^{\mathbb{Q}}[V_T^*]=1+\sigma\sqrt{\frac{T}{2\pi}}}
\end{equation} with respect to $\mathbb{Q}$ and the information available at $t=0$.  If someone buys a dollar's worth of this derivative at $t=0$ (for some distant expiration date $T$), he will compound his money at the same asymptotic rate as the best unlevered rebalancing rule in hindsight.  His initial dollar buys him $1/C_0$ units of the derivative, yielding final wealth $V_T^*/\big\{1+\sigma\sqrt{T/(2\pi)}\big\}.$  After holding the option for $T$ years, the excess continuously-compounded growth rate of the best rebalancing rule in hindsight (over and above that of the option holder) is
\begin{equation}
\frac{1}{T}\log V^*_T-\frac{1}{T}\log\bigg\{\frac{V_T^*}{1+\sigma\sqrt{T/(2\pi)}}\bigg\}=\frac{1}{T}\log\bigg\{1+\sigma\sqrt{\frac{T}{2\pi}}\bigg\},
\end{equation}which tends to 0 as $T\to\infty$.  Note that the excess growth rate is \textit{deterministic}.  Figure \ref{fig:regret} plots this excess growth rate for different volatilities and maturities.
\begin{figure}[t]
\centering
\includegraphics[width=430px]{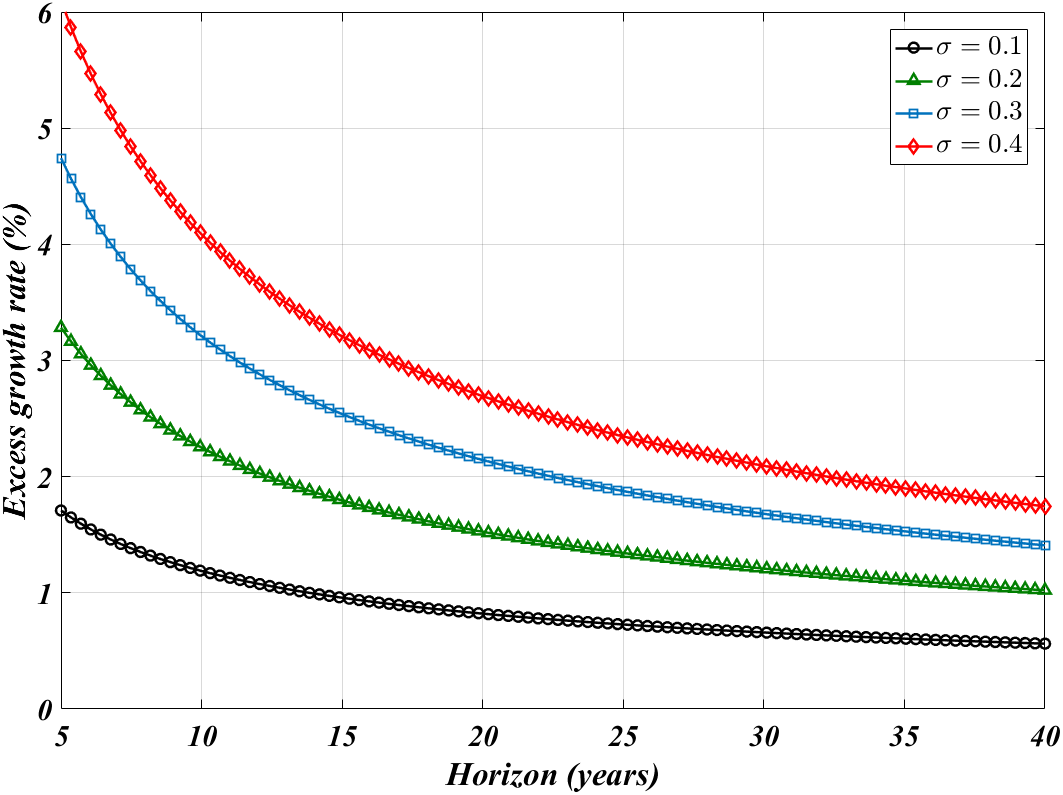}
\caption{\sc Excess continuously-compounded annual growth rate (\%) of the best (unlevered) rebalancing rule in hindsight over that of the replicating strategy.}
\label{fig:regret}
\end{figure}
\subsection{No-Arbitrage Price}
We find it somewhat more natural to start with levered hindsight optimization, corresponding to the payoff $V_T^*:=\max\limits_{b\in\mathbb{R}}V_T(b)$.  Accordingly, we take up the Black-Scholes (1973) equation
\begin{equation}
\frac{1}{2}\sigma^2S^2\frac{\partial^2 C}{\partial S^2}+rS\frac{\partial C}{\partial S}+\frac{\partial C}{\partial t}-rC=0
\end{equation} along with the boundary condition $C(S,T):=\exp\{rT+\sigma^2 b(S,T)^2\cdot T/2\}$.  For convenience, we define the auxiliary variable 
\begin{equation}
\boxed{z_t:=\frac{\log(S_t/S_0)-(r-\sigma^2/2)t}{\sigma\sqrt{t}}},
\end{equation}which is a unit normal with respect to the equivalent martingale measure.  Under this notation, we have
\begin{equation}
\boxed{b(S,t)=\frac{z(S,t)}{\sigma\sqrt{t}}}.
\end{equation}  Thus, the final payoff of Cover's rebalancing option is
\begin{equation}
V_T^*=\exp(rT+z_T^2/2).
\end{equation}  The intrinsic value at time $t$ is
\begin{equation}
\boxed{V_t^*=\exp(rt+z_t^2/2)}.
\end{equation}  We proceed to compute the expected discounted payoff with respect to the equivalent martingale measure and the information available at $t$.  To this end, we write
\begin{equation}
z_T=\sqrt{\frac{t}{T}}\cdot z_t+\sqrt{1-\frac{t}{T}}\cdot y,
\end{equation}where 
\begin{equation}
y:=\frac{\log(S_T/S_t)-(r-\sigma^2/2)(T-t)}{\sigma\sqrt{T-t}}.
\end{equation}  $y$ is a unit normal with respect to the equivalent martingale measure and the information available at $t$.  Thus, we have
\begin{equation}
\mathbb{E}_t^{\mathbb{Q}}[\exp(z_T^2/2)]=\frac{\exp\{tz_t^2/(2T)\}}{\sqrt{2\pi}}\int\limits_{-\infty}^{\infty}\exp\bigg\{-\frac{t}{2T}y^2+\frac{\sqrt{t(T-t)}}{T}z_ty\bigg\}dy.
\end{equation}  To evaluate the integral, we make note of the general formula (cf. the appendix to Reiner and Rubinstein 1992)
\begin{equation}
\int\limits_{A}^B e^{-\alpha y^2+\beta y}dy=\sqrt{\frac{\pi}{\alpha}}\exp\bigg(\frac{\beta^2}{4\alpha}\bigg)\bigg[N\bigg(B\sqrt{2\alpha}-\frac{\beta}{\sqrt{2\alpha}}\bigg)-N\bigg(A\sqrt{2\alpha}-\frac{\beta}{\sqrt{2\alpha}}\bigg)\bigg],
\end{equation}where $\alpha>0$ and $N(\cdot)$ is the cumulative normal distribution function.  Putting $\alpha:=t/(2T),$ $\beta:=\frac{\sqrt{t(T-t)}}{T}z_t$, $A:=-\infty$, and $B:=+\infty$, we get
\begin{equation}
\mathbb{E}_t^{\mathbb{Q}}[\exp(z_T^2/2)]=\sqrt{\frac{T}{t}}\exp(z_t^2/2).
\end{equation}
\begin{theorem}
For levered hindsight optimization (over all $b\in\mathbb{R}$), the price of Cover's rebalancing option is
\begin{equation}
\boxed{C(S,t)=\sqrt{\frac{T}{t}} \exp(rt+z^2/2)=\sqrt{\frac{T}{t}}\exp\{rt+\sigma^2b(S,t)^2\cdot t/2\}=\sqrt{\frac{T}{t}}V_t^*},
\end{equation}where $z:=\{\log(S/S_0)-(r-\sigma^2/2)t\}/(\sigma\sqrt{t})$, $b(S,t)$ is the best rebalancing rule in hindsight over $[0,t]$, and $V_t^*$ is the intrinsic value at time $t$.
\end{theorem}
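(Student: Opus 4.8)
\section*{Proof Proposal}

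The plan is to invoke risk-neutral valuation directly. By the fundamental theorem of asset pricing (equivalently, by Feynman-Kac applied to the Black-Scholes equation with the stated terminal condition), the no-arbitrage price is the discounted conditional expectation of the terminal payoff under the equivalent martingale measure $\mathbb{Q}$, namely $C(S,t)=e^{-r(T-t)}\mathbb{E}_t^{\mathbb{Q}}[V_T^*]$. Since the excerpt has already reduced the whole problem to the Gaussian integral $\mathbb{E}_t^{\mathbb{Q}}[\exp(z_T^2/2)]=\sqrt{T/t}\,\exp(z_t^2/2)$, the remaining work is essentially assembly.

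First I would substitute the terminal payoff $V_T^*=\exp(rT+z_T^2/2)$ and pull the deterministic factor $e^{rT}$ outside the expectation, giving $C(S,t)=e^{-r(T-t)}e^{rT}\,\mathbb{E}_t^{\mathbb{Q}}[\exp(z_T^2/2)]=e^{rt}\,\mathbb{E}_t^{\mathbb{Q}}[\exp(z_T^2/2)]$. Next I would insert the value $\sqrt{T/t}\,\exp(z_t^2/2)$ already computed for this expectation, yielding $C(S,t)=\sqrt{T/t}\,\exp(rt+z_t^2/2)$. Finally, I would convert $z_t$ back into the hindsight allocation via the boxed identity $b(S,t)=z_t/(\sigma\sqrt{t})$, so that $z_t^2/2=\sigma^2 b(S,t)^2\,t/2$; this delivers the two remaining forms $\sqrt{T/t}\,\exp\{rt+\sigma^2 b(S,t)^2 t/2\}=\sqrt{T/t}\,V_t^*$, matching the boxed intrinsic value and completing the chain of equalities.

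The one genuinely delicate point, which I would flag explicitly rather than bury, is the convergence of the Gaussian integral. Conditional on $z_t$, the variable $z_T=\sqrt{t/T}\,z_t+\sqrt{1-t/T}\,y$ is normal with variance $(T-t)/T$, which is strictly less than $1$ \emph{precisely because} $t>0$; this is exactly the condition that keeps the coefficient $\alpha=t/(2T)$ of $y^2$ positive, so that the Reiner-Rubinstein formula legitimately applies. Were the conditional variance to reach $1$ --- as it does for the unconditional time-$0$ expectation, where $z_T$ is standard normal and $\mathbb{E}_0^{\mathbb{Q}}[\exp(z_T^2/2)]=\int(2\pi)^{-1/2}\,dz=\infty$ --- the price would blow up. This is consistent with, and explains, the $\sqrt{T/t}\to\infty$ behavior of the formula as $t\to 0^+$: the \emph{levered} option has an infinite time-$0$ price and only becomes finitely valued once a positive length of history has been observed, which is why Ordentlich and Cover restricted their time-$0$ computation to the unlevered payoff. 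As a cross-check I would confirm that the candidate $C(S,t)=\sqrt{T/t}\,\exp(rt+z^2/2)$ satisfies both the Black-Scholes PDE and the terminal condition, the latter being immediate since $\sqrt{T/T}=1$ collapses the price to $V_T^*$ at expiry.
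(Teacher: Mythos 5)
Your proposal is correct and follows essentially the same route as the paper: risk-neutral valuation, the decomposition $z_T=\sqrt{t/T}\,z_t+\sqrt{1-t/T}\,y$, and the Gaussian integral $\mathbb{E}_t^{\mathbb{Q}}[\exp(z_T^2/2)]=\sqrt{T/t}\,\exp(z_t^2/2)$ evaluated via the Reiner--Rubinstein formula with $\alpha=t/(2T)>0$. Your explicit remark on why $t>0$ is needed for convergence (and the resulting $C\to+\infty$ as $t\to0^+$) is a worthwhile clarification that the paper only mentions in passing later, but it does not change the argument.
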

\begin{theorem}
The American-style version of Cover's Derivative (that expires at $T$, has zero exercise price, and pays $V_t^*$ upon exercise at $t$) will never be excercised early in equilibrium.  The American price $C_a(S,t)$ is equal to the European price $C_e(S,t)=\sqrt{T/t}\cdot V_t^*$.
\end{theorem}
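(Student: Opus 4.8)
The plan is to recognize the American claim as a standard optimal-stopping problem and to show that its continuation value never drops below the intrinsic value, so that waiting until $T$ is always weakly — and in fact strictly — optimal. Working under the equivalent martingale measure $\mathbb{Q}$ and conditioning on the information available at a time $t\in(0,T]$, I would write the American value as
\begin{equation}
C_a(S,t)=\sup_{\tau\in[t,T]}\mathbb{E}_t^{\mathbb{Q}}\big[e^{-r(\tau-t)}V_\tau^*\big],
\end{equation}
where the supremum ranges over all $\mathbb{Q}$-stopping times valued in $[t,T]$ and $V_s^*=\exp(rs+z_s^2/2)$ is the payoff delivered upon exercise at $s$. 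It is convenient to pass to the discounted reward $X_s:=e^{-rs}V_s^*$, so that $C_a(S,t)=e^{rt}\sup_{\tau}\mathbb{E}_t^{\mathbb{Q}}[X_\tau]$, and to pin the value down by two bounds.

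For the lower bound, the deterministic rule $\tau\equiv T$ is admissible and, by Theorem 1, yields exactly the European price, so $C_a(S,t)\geq C_e(S,t)=\sqrt{T/t}\,V_t^*$. For the upper bound, I would use the $\mathbb{Q}$-martingale $N_s:=e^{-rs}C_e(S_s,s)=\sqrt{T/s}\,X_s=\mathbb{E}_s^{\mathbb{Q}}[e^{-rT}V_T^*]$ supplied by Theorem 1, which dominates the discounted reward pointwise since $\sqrt{T/s}\geq1$ for every $s\leq T$. For any admissible $\tau\in[t,T]$, optional sampling applied to the bounded stopping time $\tau$ gives $\mathbb{E}_t^{\mathbb{Q}}[X_\tau]\leq\mathbb{E}_t^{\mathbb{Q}}[N_\tau]=N_t=\sqrt{T/t}\,X_t$; taking the supremum over $\tau$ yields $C_a(S,t)\leq C_e(S,t)$ (this is just the statement that the Snell envelope is the smallest supermartingale dominating the reward). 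Combining the two bounds gives $C_a=C_e=\sqrt{T/t}\,V_t^*$.

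Finally, because $\sqrt{T/t}>1$ strictly for every $t<T$, the holding value $C_e(S,t)$ strictly exceeds the exercise value $V_t^*$ at every state prior to maturity, so immediate exercise is strictly suboptimal and the early-exercise region is empty. The main obstacle is integrability rather than the comparison itself: the payoff has Gaussian-square tails, and indeed $\mathbb{E}_0^{\mathbb{Q}}[\exp(z_t^2/2)]=+\infty$, which is exactly why the price diverges as $t\to0^+$. The optimal-stopping argument is nevertheless legitimate on $(0,T]$ because, conditional on $\mathcal{F}_t$ with $t>0$, the relevant expectations converge just as in the proof of Theorem 1 — the Gaussian integral in the residual innovation carries a strictly positive coefficient $\alpha=t/(2s)>0$ — so the martingale $\mathbb{E}_s^{\mathbb{Q}}[e^{-rT}V_T^*]$ is finite and optional sampling applies without incident.
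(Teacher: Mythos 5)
Your proof is correct, and it takes a genuinely different (and more rigorous) route than the paper's. The paper argues economically: $C_a(S,t)\geq C_e(S,t)$ because the American contract confers strictly more rights, and $C_e(S,t)=\sqrt{T/t}\cdot V_t^*>V_t^*$ for $0<t<T$, so the option is always ``worth more alive than dead''; a holder would sell rather than exercise, and since the early-exercise right is never used it adds no value, whence $C_a=C_e$. The paper then supplements this with the variational-inequality formulation, checking that $\sqrt{T/t}\cdot V_t^*$ satisfies the Black--Scholes partial differential inequality, the obstacle condition $C\geq V_t^*$, the terminal condition, and continuity of $\partial C/\partial S$. You instead pose the optimal-stopping problem directly and prove two bounds: the lower bound from the admissible rule $\tau\equiv T$, and the upper bound by optional sampling against the closed $\mathbb{Q}$-martingale $N_s=e^{-rs}C_e(S_s,s)=\sqrt{T/s}\,X_s$, which dominates the discounted reward $X_s$ pointwise because $\sqrt{T/s}\geq 1$ on $[t,T]$ --- i.e., the Snell-envelope characterization. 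This makes rigorous precisely the step the paper leaves informal (that an early-exercise right which is never optimally used contributes nothing to the price), and your remark on integrability --- that $\mathbb{E}_0^{\mathbb{Q}}[\exp(z_t^2/2)]=+\infty$ while all conditional expectations given $\mathcal{F}_t$ with $t>0$ are finite --- is a genuine point the paper glosses over and is exactly what licenses optional sampling on $(0,T]$. The trade-off is the usual one: the paper's argument is shorter and economically transparent, while yours is self-contained probabilistically and pins down why $C_a\leq C_e$ rather than only why exercise is suboptimal.
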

\begin{proof}
Note that $C_e(S_t,t)>V_t^*$ for $0<t<T$, e.g. the European price of Cover's Derivative always exceeds the exercise value.  To prevent arbitrage opportunities, we must have $C_a(S,t)\geq C_e(S,t)$ on account of the additional rights granted by the American-style option.  Thus, we always have $C_a(S_t,t)>V_t^*$, which means, to quote Merton's (1973) terminology, that the option is ``worth more alive than dead.''  In equilibrium, there are always willing buyers ready to pay more than the exercise value, so the option would be sold to such buyers instead of being exercised.  Thus, early exercise being useless anyhow, we conclude that $C_a(S,t)=C_e(S,t)$.
\end{proof}
To be quite formal about it, the present American option valuation problem (cf. Wilmott 1998) consists in solving the partial differential inequality
\begin{equation}
\frac{1}{2}\sigma^2S^2\frac{\partial^2C}{\partial S^2}+rS\frac{\partial C}{\partial S}+\frac{\partial C}{\partial t}-rC\leq0
\end{equation}together with the side conditions $C(S,T)=V_T^*, C(S,t)\geq V_t^*$, and subject to the proviso that $\partial C/\partial S$ is continuous.  These conditions are all indeed satisfied by the formula $C(S,t)=\sqrt{T/t}\cdot V_t^*=\sqrt{T/t}\cdot\exp\{rt+z(S,t)^2/2\}.$
\subsection{Replicating Strategy and the Greeks}
Differentiating the price, we find at once that 
\begin{equation}
\boxed{\Delta:=\frac{\partial C}{\partial S}=\frac{C\cdot z}{S\sigma\sqrt{t}}=\frac{C\cdot b(S,t)}{S}}.
\end{equation}
or, equivalently, that $\Delta S/C=b(S,t)$.
\begin{theorem}
The replicating strategy for Cover's Derivative bets the fraction $b(S,t)$ of wealth on the stock in state $(S,t)$.  Thus, to replicate Cover's Derivative, one just uses the best rebalancing rule in hindsight as it is known at time $t$.
\end{theorem}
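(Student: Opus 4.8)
The plan is to recast the delta-hedge of Cover's Derivative into the fixed-fraction portfolio language of the paper, and then read off the bet fraction of the self-financing replicating portfolio. First I would recall the standard Black-Scholes replication recipe: since the price $C(S,t)$ satisfies the Black-Scholes equation, the claim is replicated by the self-financing portfolio that holds $\Delta=\partial C/\partial S$ shares of the stock, worth $\Delta S$ dollars, and invests the residual $C-\Delta S$ dollars in the bond. Writing the wealth of this portfolio as $C=\Delta S+\beta B$ and imposing the self-financing condition gives $dC=\Delta\,dS+\beta\,dB$, so that the value of the hedging portfolio coincides with the option value at every instant.

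Next I would divide through by $C$ and substitute $\beta B=C-\Delta S$ together with $dB/B=r\,dt$ to express the return of the hedging portfolio in the fixed-fraction form used throughout the paper:
\begin{equation}
\frac{dC}{C}=\frac{\Delta S}{C}\cdot\frac{dS}{S}+\Bigl(1-\frac{\Delta S}{C}\Bigr)\frac{dB}{B}.
\end{equation}
Comparing this with the wealth dynamics of a rebalancing rule, $dV/V=b\,(dS/S)+(1-b)(dB/B)$ from equation (3), I would read off that the replicating portfolio is itself a fixed-fraction betting scheme whose instantaneous bet fraction is $\Delta S/C$.

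The final step invokes the delta already computed just above the theorem, namely $\Delta=C\,b(S,t)/S$, which immediately yields $\Delta S/C=b(S,t)$. Hence the hedging portfolio bets exactly the fraction $b(S,t)$ of its wealth on the stock over $[t,t+dt]$, and $b(S,t)$ is precisely the best rebalancing rule in hindsight over the observed history $[0,t]$; the equivalent statement in share terms is that one holds $\Delta=b(S,t)C/S$ shares, matching the paper's $\Delta(S,t):=b(S,t)C(S,t)/S$. I do not anticipate a genuine obstacle, since the delta formula does all the heavy lifting. The only point requiring care is the bookkeeping that correctly identifies $\Delta S/C$—rather than $\Delta$ itself—as the \emph{bet fraction}, so that the self-referential punchline emerges cleanly: at each instant the hedger is simply deploying the currently-known hindsight optimum.
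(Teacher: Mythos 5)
Your proposal is correct and follows essentially the same route as the paper: differentiate the price to obtain $\Delta=\partial C/\partial S=C\,b(S,t)/S$, observe that the delta-hedging portfolio's dollar exposure to the stock as a fraction of its value is $\Delta S/C=b(S,t)$, and conclude. The paper treats the theorem as an immediate corollary of that delta computation; your additional self-financing bookkeeping merely makes explicit what the paper leaves implicit.
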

Hence, for the complete market with a single stock in geometric Brownian motion, assuming levered hindsight optimization, the following three trading strategies are identical:
\begin{enumerate}
\item
The strategy that looks back over the known price history $[0,t]$, finds the best continuously-rebalanced portfolio in hindsight, and uses that portfolio over the interval $[t,t+dt]$
\item
The $\Delta$-hedging strategy induced by Cover's Derivative
\item
The natural estimator $(\hat{\mu}-r)/\sigma^2$ of the continuous-time Kelly rule $(\mu-r)/\sigma$ (cf. Luenberger 1998)
\end{enumerate}
For reference, we catalog the rest of the Greeks below.
\begin{equation}
\boxed{\Gamma:=\frac{\partial\Delta}{\partial S}=\frac{z^2-1}{(S\sigma\sqrt{t})^3}C}
\end{equation}
\begin{equation}
\boxed{\Theta:=\frac{\partial C}{\partial t}=\bigg(r-\frac{1}{2t}-\frac{z^2}{2}\bigg)C}
\end{equation}
Thus, there will be significant time decay in the option value for small times $t$ and for extreme price realizations in either direction.
\begin{equation}
\boxed{\nu:=\frac{\partial C}{\partial\sigma}=\bigg[\frac{\sqrt{t}}{2}+\frac{r+\log(S_0/S)}{\sigma^2t}\bigg]Cz}.
\end{equation}
There are generally two implied volatilities that rationalize a given observed value of $C$.  To show this, we start with the relation 
\begin{equation}\label{iveq}
z^2=2(\log\,C-rt)+\log(t/T).
\end{equation}Comparing (\ref{iveq}) with the definition
\begin{equation}
z^2=\frac{[\log(S_t/S_0)-(r-\sigma^2/2)t]^2}{\sigma^2t},
\end{equation}we get a quadratic equation in the variance $\sigma^2$.
The lowest possible rational option price is $\sqrt{T/t}\cdot e^{rt}$, which corresponds to $z_t=0$.  This happens if and when $S_t=S_0e^{(r-\sigma^2/2)t}$.  Figure \ref{fig:vol} plots the option price against $\sigma$ for the parameters $t:=0.5$, $T:=1$, $r:=0.03$, $S_0:=100$, and $S_t:=105$.
\begin{figure}[t]
\centering
\includegraphics[width=350px]{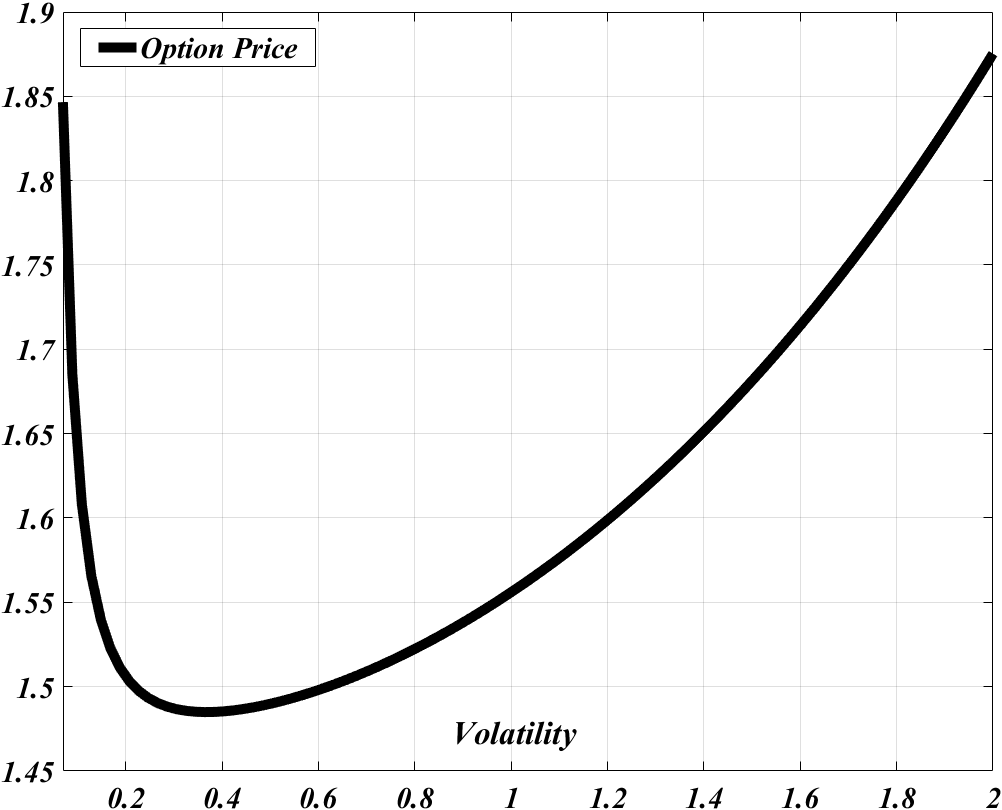}
\caption{\sc The dual implied volatilities that rationalize an observed price of Cover's Derivative,  $t:=0.5$, $T:=1$, $r:=0.03$, $S_0:=100$, $S_t:=105$.}
\label{fig:vol}
\end{figure}
Finally, we have the interest rate sensitivity
\begin{equation}
\boxed{\rho:=\frac{\partial C}{\partial r}=[1-b(S,t)]Ct}.
\end{equation}
Thus, when the best rebalancing rule in hindsight makes a positive allocation to cash, higher interest rates will make the option more valuable.  When the hindsight-optimized rebalancing rule uses margin debt ($b(S,t)>1$), higher interest rates will make the option less valuable.
\subsection{Unlevered Hindsight Optimization}
In this subsection, we take up the case of unlevered hindsight optimization, obtaining a more direct generalization of Ordentlich and Cover's (1998) formula $C_0=1+\sigma\sqrt{T/(2\pi)}$.  Thus, we consider the payoff
\begin{align}
V_t^*:=\left\{ \begin{array}{cc} 
                e^{rt} & \hspace{5mm} \text{if}\,\,z_t\leq0\\
                \exp(rt+z_t^2/2) & \hspace{5mm} \text{if}\,\,0\leq z_t\leq\sigma\sqrt{t},\\
                S_t/S_0 & \text{if}\,\,z_t\geq\sigma\sqrt{t} \\
                \end{array} \right.
\end{align}
where $z_t:=\{\log(S_t/S_0)-(r-\sigma^2/2)t\}/(\sigma\sqrt{t})$.
In this connection, the replicating strategy no longer coincides with the best (unlevered) rebalancing rule in hindsight $b^u(S,t)$ over the known history $[0,t]$.  
\par
Again, we make the decomposition $z_T=\sqrt{t/T}\cdot z_t+\sqrt{1-t/T}\cdot y$, where $y:=\{\log(S_T/S_t)-(r-\sigma^2/2)(T-t)\}/(\sigma\sqrt{T-t})$.  With this terminology, the final payoff becomes
\begin{align}
V_T^*=\left\{ \begin{array}{cc} 
                e^{rT} & \hspace{5mm} \text{if}\,\,y\leq-z_t\sqrt{t/(T-t)}\\
                \exp(rT+z_T^2/2) & \hspace{5mm} \text{if}\,\,-z_t\sqrt{t/(T-t)}\leq y\leq(\sigma T-\sqrt{t}z_t)/\sqrt{T-t}.\\
                S_T/S_0 & \text{if}\,\,y\geq(\sigma T-\sqrt{t}z_t)/\sqrt{T-t} \\
                \end{array} \right.
\end{align}
The expected discounted payoff is the sum of three integrals $I_1+I_2+I_3$, corresponding to the three events 
$b^u(S_T,T)=0$, $0<b^u(S_T,T)<1$, and $b^u(S_T,T)=1$.  Each integral constitutes a separate solution of the Black-Scholes equation.  To further simplify the notation, we define $A:=-z_t\sqrt{t/(T-t)}$ and $B:=A+\sigma T/\sqrt{T-t}$.  We have
\begin{equation}
I_1:=\frac{1}{\sqrt{2\pi}}\int\limits_{-\infty}^{A}\exp(rt-y^2/2)dy=e^{rt}N(A),
\end{equation}where $N(\cdot)$ is the cumulative normal distribution function.  Next, we get
\begin{equation}
I_2:=\frac{\exp\{rt+tz_t^2/(2T)\}}{\sqrt{2\pi}}\int\limits_{A}^{B}\exp\bigg(-\frac{t}{2T}y^2+\frac{\sqrt{t(T-t)}}{T}z_ty\bigg)dy.
\end{equation}Evaluating the integral and simplifying, one has
\begin{equation}
I_2=\sqrt{\frac{T}{t}}\exp(rt+z_t^2/2)\bigg[N\bigg(A\sqrt{\frac{T}{t}}+\sigma\sqrt{\frac{tT}{T-t}}\bigg)-N\bigg(A\sqrt{\frac{T}{t}}\bigg)\bigg].
\end{equation}
Finally, we calculate
\begin{equation}
I_3:=e^{-r(T-t)}\frac{S_t}{S_0}\cdot\frac{\exp\{(r-\sigma^2/2)(T-t)\}}{\sqrt{2\pi}}\int\limits_B^\infty \exp\big(-y^2/2+\sigma\sqrt{T-t}\cdot y\big)dy,
\end{equation}which simplifies to
\begin{equation}
I_3=\frac{S_t}{S_0}N(\sigma\sqrt{T-t}-B).
\end{equation}
\begin{theorem}
For the single-stock Black-Scholes market with unlevered hindsight optimization, the price $C^u(S,t)$ of Cover's Derivative is
\begin{multline}
C^{u}(S,t)=e^{rt}N(A)+C(S,t)\bigg[N\bigg(A\sqrt{\frac{T}{t}}+\sigma\sqrt{\frac{tT}{T-t}}\bigg)-N\bigg(A\sqrt{\frac{T}{t}}\bigg)\bigg]\\+\frac{S_t}{S_0}N(\sigma\sqrt{T-t}-B),\\
\end{multline}where $z:=\{\log(S_t/S_0)-(r-\sigma^2/2)t\}/(\sigma\sqrt{t})$, $A:=-z\sqrt{t/(T-t)}$, $B:=A+\sigma T/\sqrt{T-t}$, and $C(S,t):=\sqrt{T/t}\cdot \exp(rt+z^2/2)$ is the price of Cover's Derivative under levered hindsight optimization.
\end{theorem}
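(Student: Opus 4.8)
The plan is to price $C^u(S,t)$ by risk-neutral valuation. Since the single-stock Black-Scholes market is complete, the no-arbitrage price equals the discounted conditional expectation $C^u(S,t)=e^{-r(T-t)}\mathbb{E}_t^{\mathbb{Q}}[V_T^*]$ under the equivalent martingale measure. I would start from the decomposition $z_T=\sqrt{t/T}\,z_t+\sqrt{1-t/T}\,y$ established above, in which $y$ is a unit normal under $\mathbb{Q}$ given the information at $t$. Because the unlevered payoff is piecewise according to whether the terminal allocation $b^u(S_T,T)$ equals $0$, lies strictly inside $(0,1)$, or equals $1$, the expectation splits additively into three integrals $I_1+I_2+I_3$ over the corresponding disjoint events, each of which is separately a solution of the Black-Scholes equation carrying the relevant piece of terminal data.

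The next step is to translate the three defining conditions on $z_T$ (namely $z_T\le 0$, $0\le z_T\le\sigma\sqrt{T}$, and $z_T\ge\sigma\sqrt{T}$) into conditions on the integration variable $y$. Solving the linear relation for $y$ produces the thresholds $A:=-z_t\sqrt{t/(T-t)}$ (from $z_T=0$) and $B:=A+\sigma T/\sqrt{T-t}$ (from $z_T=\sigma\sqrt{T}$), so that $I_1$ ranges over $(-\infty,A]$, $I_2$ over $[A,B]$, and $I_3$ over $[B,\infty)$. Getting these limits exactly right is the first point demanding care, since every subsequent argument of $N(\cdot)$ inherits from them.

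The three evaluations are then mechanical. For $I_1$, the discounted constant payoff $e^{rT}$ collapses, after the factor $e^{-r(T-t)}$, to $e^{rt}N(A)$. For $I_2$, I would substitute $z_T=\sqrt{t/T}\,z_t+\sqrt{1-t/T}\,y$ into $\exp(rT+z_T^2/2)$, collect the quadratic and linear terms in $y$, and apply the Reiner-Rubinstein Gaussian formula with $\alpha:=t/(2T)$ and $\beta:=\sqrt{t(T-t)}\,z_t/T$ over the finite range $[A,B]$; the completed-square prefactor assembles into exactly $\sqrt{T/t}\,\exp(rt+z_t^2/2)=C(S,t)$, while the two shifted arguments simplify neatly to $A\sqrt{T/t}$ and $A\sqrt{T/t}+\sigma\sqrt{tT/(T-t)}$, leaving the bracketed difference of normal CDFs. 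For $I_3$, I would write $S_T/S_0=(S_t/S_0)\exp\{(r-\sigma^2/2)(T-t)+\sigma\sqrt{T-t}\,y\}$, complete the square in the exponent, and verify that the discount factor, the drift term, and the variance term cancel identically so that only $(S_t/S_0)\,N(\sigma\sqrt{T-t}-B)$ survives. Summing $I_1+I_2+I_3$ and recognizing the middle prefactor as the levered price $C(S,t)$ of the earlier theorem yields the asserted three-term formula.

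The hard part will not be conceptual but bookkeeping: the evaluation of $I_2$, where one must carry the completed-square prefactor and the two shifted arguments of $N$ correctly through the Reiner-Rubinstein formula, and the cancellation of exponentials inside $I_3$, are the steps most prone to algebraic slips. A useful consistency check is the limit $t\to 0$ with $z_t=0$ and $r=0$: there $A\to 0$ and $B\to\sigma\sqrt{T}$, so $I_1\to\tfrac12$, $I_3\to\tfrac12$, and $I_2\to\sigma\sqrt{T/(2\pi)}$, recovering exactly Ordentlich and Cover's $C_0=1+\sigma\sqrt{T/(2\pi)}$.
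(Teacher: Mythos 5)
Your proposal is correct and follows essentially the same route as the paper: the same decomposition $z_T=\sqrt{t/T}\,z_t+\sqrt{1-t/T}\,y$, the same thresholds $A$ and $B$, the same three-way split $I_1+I_2+I_3$, and the same application of the Reiner--Rubinstein Gaussian formula with $\alpha=t/(2T)$ and $\beta=\sqrt{t(T-t)}\,z_t/T$. The limiting consistency check recovering $C_0=1+\sigma\sqrt{T/(2\pi)}$ is a nice addition not present in the paper.
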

\subsection{Binomial Lattice Price}
For the sake of completeness, we proceed to derive the general price of Cover's Derivative on the Cox-Ross-Rubinstein (1979) binomial lattice.  By abuse of notation, let $r$ denote the per-period interest rate, with $R:=1+r$ being the gross rate of interest.  We subdivide the interval $[0,T]$ into $N$ subintervals of length $\Delta t:=T/N$.  The stock price $S(t)$ evolves according to 
\begin{equation}
S(t+\Delta t)=\begin{cases}
S(t)\cdot u& \text{with probability }p\\
S(t)\cdot d& \text{with probability }1-p\\
\end{cases}
\end{equation}where $u,d$ are constants such that $0<d<R<u$.  We let $q:=(R-d)/(u-d)$ be the risk-neutral probability, with $1-q=(u-R)/(u-d)$.  The payoff-relevant state is the number of ups $j$, where $0\leq j\leq N$.  After $N$ plays, the (possibly levered) rebalancing rule $b$ has grown the initial dollar into
\begin{equation}
V_T(b):=R^N[1+b(u/R-1)]^j[1+b(d/R-1)]^{N-j}.
\end{equation}  To get the best rebalancing rule in hindsight over $[0,T]$, we take logs and differentiate with respect to $b$, yielding the first-order condition
\begin{equation}
j(u-R)[1+b(d/R-1)]+(N-j)(d-R)[1+b(u/R-1)]=0.
\end{equation}  Solving and simplifying, the best rebalancing rule in hindsight (after $j$ ups and $N-j$ downs) is
\begin{equation}
b(j,N):=\frac{R}{N(u-d)}\bigg(\frac{j}{q}-\frac{N-j}{1-q}\bigg)
\end{equation}The final payoff of Cover's Derivative is
\begin{equation}
V_T^*(j,N):=\bigg(\frac{R}{N}\bigg)^N\bigg(\frac{j}{q}\bigg)^j\bigg(\frac{N-j}{1-q}\bigg)^{N-j},
\end{equation}  where we have adopted the convention that $0^0:=1$.  If the hindsight-optimization is restricted to unlevered rebalancing rules $b\in[0,1],$ then the payoff becomes
\begin{align}
V_T^*:=\left\{ \begin{array}{cc} 
                R^N & \hspace{5mm} \text{if }j\leq Nq\\
                (\frac{R}{N})^N\big(\frac{j}{q}\big)^j\big(\frac{N-j}{1-q}\big)^{N-j} & \hspace{5mm} \text{if }Nq<j<Nq+\frac{u-d}{Rq(1-q)}.\\
                u^jd^{N-j} & \text{if }j\geq Nq+\frac{u-d}{Rq(1-q)} \\
                \end{array} \right.
\end{align}
For unlevered hindsight optimization, Ordentlich and Cover (1998) gave us the time-0 price
\begin{multline}\label{oc98}
C_0=\text{Prob}\{j\leq Nq\}+\sum\limits_{Nq<j<Nq+\frac{u-d}{Rq(1-q)}}\binom{N}{j}\bigg(\frac{j}{N}\bigg)^j\bigg(1-\frac{j}{N}\bigg)^{N-j}\\+\sum\limits_{j\geq Nq+\frac{u-d}{Rq(1-q)}}\binom{N}{j}(qu)^j[(1-q)d]^{N-j}.\\
\end{multline}We supplement this formula by computing the general price under levered hindsight optimization in state $(k,n)$, where $k$ upticks have occured in the first $n$ time steps.  Letting $j$ denote the number of upticks in the next $N-n$ steps, the expected discounted payoff in state $(k,n)$ with respect to the risk-neutral measure is
\begin{equation}
C(k,n):=q^{-k}(1-q)^{k-n}\sum\limits_{j=0}^{N-n}\binom{N-n}{j}\bigg(\frac{j+k}{N}\bigg)^{j+k}\bigg(1-\frac{j+k}{N}\bigg)^{N-j-k}.
\end{equation}
This being done, one can replicate Cover's Derivative on the binomial lattice by using our general price $C(k,n)$ in conjunction with the formula
\begin{equation}
\Delta:=\frac{C_u-C_d}{S(u-d)}=\frac{C(k+1,n+1)-C(k,n+1)}{S(u-d)},
\end{equation}where $S$ is the current stock price, $n$ is the number of time steps to date, and $k$ is the number of upticks that have occured so far.
\par
To obtain a more direct generalization of (\ref{oc98}), we close this subsection by computing the price of Cover's Derivative for unlevered hindsight optimization in all possible states $(k,n).$  The price consists of three terms $C^u(k,n):=\Sigma_1+\Sigma_2+\Sigma_3$, corresponding to the three events $b^*\leq0$, $0<b^*<1$, and $b^*\geq 1.$  Again, $j$ will denote the number of upticks that occur over the next $N-n$ time steps.  We start with
\begin{equation}
\Sigma_1:=\sum\limits_{0\leq j\leq Nq-k}\binom{N-n}{j}q^j(1-q)^{N-n-j}.
\end{equation}Next, we get
\begin{equation}
\Sigma_2:=q^{-k}(1-q)^{k-n}\sum\limits_{Nq-k<j<Nq-k+\frac{u-d}{Rq(1-q)}}\binom{N-n}{j}	\bigg(\frac{j+k}{N}\bigg)^{j+k}\bigg(1-\frac{j+k}{N}\bigg)^{N-j-k}.
\end{equation}Finally, we have
\begin{equation}
\Sigma_3:=q^{-k}(1-q)^{k-n}\sum\limits_{j\geq Nq-k+\frac{u-d}{Rq(1-q)}}\binom{N-n}{j}(qu)^{k+j}[(1-q)d]^{N-k-j}.
\end{equation}
\subsubsection*{Simulation:  ``Shannon's Demon''}
To illustrate the replication of Cover's Derivative on a binomial lattice, we simulate Shannon's canonical discrete-time example (cf. Poundstone 2010).  This amounts to the parameters $u:=2$, $d:=1/2$, $r:=0$, $R=1$, and the risk-neutral probability $q=1/3$.  The gambler buys (replicates) a dollar's worth of Cover's Derivative at $n=0$, and holds the option until $n=N$.  His wealth after $n$ steps (and $k$ upticks) is $C(k,n)/C(0,0)$.  By comparison, the stock price will be $2^{2k-n}.$  Figure \ref{fig:demon} plots a sample path for $N:=300$ periods.
\begin{figure}[t]
\centering
\includegraphics[width=430px]{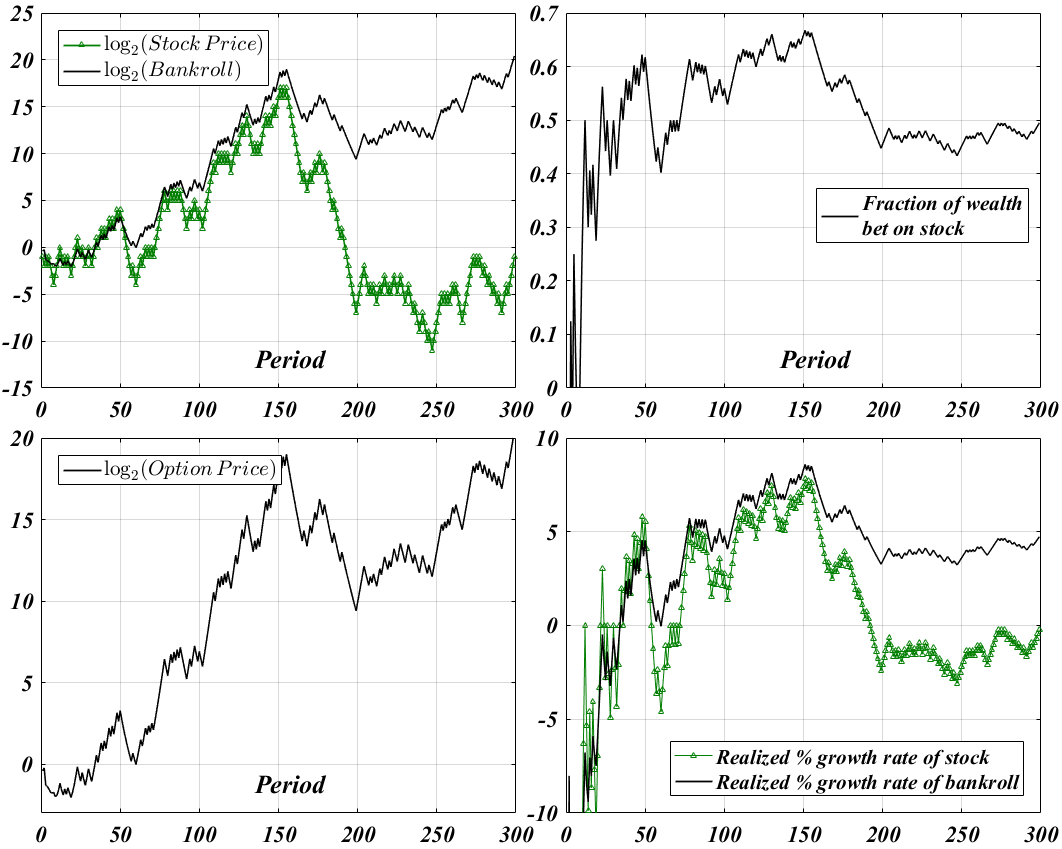}
\caption{\sc Replication of Cover's Derivative on the binomial lattice for Shannon's canonical example (``Shannon's Demon'').}
\label{fig:demon}
\end{figure}
\section{Several Underlyings}
We turn our attention to the general stock market with $n$ correlated stocks $(i=1,...,n)$ that follow the geometric Brownian motions
\begin{equation}
\frac{dS_{it}}{S_{it}}=\mu_i\,dt+\sigma_i\, dW_{it},
\end{equation}where $S_{it}$ is the price of stock $i$ at $t$ and $\mu_i,\sigma_i$ are the drift and volatility of stock $i$, respectively.  The $(W_{it})_{i=1}^n$ are standard Brownian motions, with $\rho_{ij}:=\text{Corr}(dW_{it},dW_{jt})$ being the correlation coefficient of the instantaneous changes in $W_{it}$ and $W_{jt}$.  The correlation matrix, which is assumed to be invertible, is denoted $R:=[\rho_{ij}]_{n\times n}$.
Next, we let
\begin{equation}
\sigma_{ij}:=\rho_{ij}\sigma_i\sigma_j=\text{Cov}\bigg(\frac{dS_{it}}{S_{it}},\frac{dS_{jt}}{S_{jt}}\bigg)\big/dt.  
\end{equation}
We let $\Sigma:=[\sigma_{ij}]_{n\times n}$ denote the covariance matrix of instantaneous returns per unit time, and we write $\Sigma=MRM$, where $M:=\text{diag}(\sigma_1,...,\sigma_n)$ is the diagonal matrix of volatilities.
\par
We take up the general rebalancing rules $b:=(b_1,...,b_n)'\in\mathbb{R}^n,$ where $b_i$ is the fraction of wealth bet on stock $i$ over the interval $[t,t+dt].$  Thus, the trader keeps the fraction $1-\sum\limits_{i=1}^nb_i$ of his wealth in bonds over the interval $[t,t+dt].$  As before, we let $V_t(b)$ denote the gambler's wealth at $t$, where $V_0:=1.$  The trader's wealth evolves according to
\begin{multline}
\frac{dV_t(b)}{V_t(b)}=\sum\limits_{i=1}^nb_i\frac{dS_{it}}{S_{it}}+\bigg(1-\sum\limits_{i=1}^nb_i\bigg)\frac{dB_t}{B_t}\\=\bigg[\sum\limits_{i=1}^nb_i\mu_i+\bigg(1-\sum\limits_{i=1}^nb_i\bigg)r\bigg]dt+\sum\limits_{i=1}^nb_i\sigma_idW_{it}.\\
\end{multline}
For brevity, let $\mu:=(\mu_1,...,\mu_n)'$ denote the vector of drifts.  We then have
\begin{equation}
\frac{dV_t(b)}{V_t(b)}=[r+(\mu-r\textbf{1})'b]dt+\sum_{i=1}^nb_i\sigma_idW_{it},
\end{equation}where $\textbf{1}:=(1,...,1)'$ is an $n\times 1$ vector of ones.  The solution of this stochastic differential equation is given by
\begin{equation}\label{soln2}
V_t(b)=\exp\bigg\{[r+(\mu-r\textbf{1})'b-b'\Sigma b/2]t+\sum\limits_{i=1}^nb_i\sigma_iW_{it}\bigg\}.
\end{equation}
This can be verified directly by applying the multivariate version of It\^{o}'s Lemma (Bj\"{o}rk 1998) to the function
\begin{equation}
F(W_1,...,W_n,t):=\exp\bigg\{[r+(\mu-r\textbf{1})'b-b'\Sigma b/2]t+\sum\limits_{i=1}^nb_i\sigma_iW_{i}\bigg\}.
\end{equation}
Indeed, we get
\begin{equation}
dF_t=\frac{\partial{F}}{\partial{t}}dt+\sum_{i=1}^n\frac{\partial{F}}{\partial{W_i}}dW_{it}+\frac{1}{2}\sum_{i=1}^n\sum_{j=1}^n\frac{\partial^2F}{\partial W_i\partial W_j}\rho_{ij}dt.
\end{equation}
Substituting $\partial F/\partial t=F\cdot [r+(\mu-r\textbf{1})'b-b'\Sigma b/2]$, $\partial F/\partial W_i=F\cdot b_i\sigma_i,$ and $\partial^2 F/\partial W_i\partial W_j=F\cdot b_ib_j\sigma_i\sigma_j$ yields the desired result.  Proceeding as before, we take the expression
\begin{equation}
\sigma_iW_{it}=\log(S_{it}/S_{i0})-(\mu_i-\sigma_i^2/2)t
\end{equation} and substitute it into (\ref{soln2}).  This yields
\begin{equation}
V_t(b)=\exp\bigg\{(r-b'\Sigma b/2)t+\sum_{i=1}^n b_i[\log(S_{it}/S_{i0})-(r-\sigma_i^2/2)t]\bigg\}.
\end{equation}
For brevity, let 
\begin{equation}
\boxed{z_{i}:=\frac{\log(S_{it}/S_{i0})-(r-\sigma_i^2/2)t}{\sigma_i\sqrt{t}}}.
\end{equation}
Under the equivalent martingale measure, the variables $z:=(z_1,...,z_n)'$ are all unit normals, with correlation matrix $R=[\rho_{ij}]$.  Thus, we can write
\begin{equation}
V_t(b)=\exp\{(r-b'\Sigma b/2)t+\sqrt{t}\cdot z'Mb\}.
\end{equation}
Maximizing $V_t(b)$ with respect to $b$, we get the first-order condition
\begin{equation}
t\Sigma b=\sqrt{t}Mz.
\end{equation}  For simplicity, let $S:=(S_1,...,S_n)'$ denote the vector of stock prices, and let $b(S,t)$ denote the best rebalancing rule in hindsight over $[0,t]$.  Solving the first-order condition yields
\begin{equation}
\boxed{b(S,t)=\frac{1}{\sqrt{t}}\cdot M^{-1}R^{-1}z}.
\end{equation}  The final wealth that accrues to a $\$1$ deposit into the best rebalancing rule in hindsight over $[0,t]$ is
\begin{equation}
\boxed{V_t^*=\exp(rt+z'R^{-1}z/2)=\exp(rt+t\cdot b'\Sigma b/2)}.
\end{equation}
Hence, the final payoff of Cover's Derivative is $V_T^*=\exp(rT+z'R^{-1}z/2)$.  Again, we see that the final wealth of the best (levered) rebalancing rule in hindsight is \textit{Markovian}: it depends only on the current state $(S_1,...,S_n,t)$.
\par
We pass to the multivariate version of the Black-Scholes equation (Wilmott 2001), which governs the no-arbitrage price of ``rainbow'' or ``correlation'' options dependent on several underlyings.  As usual $C(S_1,...,S_n,t)=C(S,t)$ will denote the price of Cover's Derivative.  We solve the differential equation
\begin{equation}
\frac{1}{2}\sum_{i=1}^n\sum_{j=1}^n\rho_{ij}\sigma_i\sigma_jS_iS_j\frac{\partial^2C}{\partial S_i\partial S_j}+r\sum\limits_{i=1}^nS_i\frac{\partial C}{\partial S_i}+\frac{\partial C}{\partial t}-rC=0
\end{equation}
with the boundary condition $C(S,T):=V_T^*(S)=\exp(rT+z_T'R^{-1}z_T/2)$.  As usual, we do this by computing the expected discounted payoff with respect to the equivalent martingale measure.
\par
To this end, we again write
\begin{equation}
z_T=\sqrt{t/T}\cdot z_t+\sqrt{1-t/T}\cdot y,
\end{equation}where 
\begin{equation}
y_i:=\frac{\log(S_{iT}/S_{it})-(r-\sigma_i^2/2)(T-t)}{\sigma_i\sqrt{T-t}}.
\end{equation}The $y_i$ are all unit normals with respect to the equivalent martingale measure $\mathbb{Q}$ and the information available at $t$.  $R$ is the correlation matrix of the random vector $y:=(y_1,...,y_n)'$.  The conditional density of $y$ is
$f(y):=(2\pi)^{-n/2}\det(R)^{-1/2}\exp(-y'R^{-1}y/2)$.  Expanding the quadratic form $z_T'R^{-1}z_T$, we get
\begin{equation}
z_T'R^{-1}z_T/2=\frac{t}{2T}z_t'R^{-1}z_t+\frac{\sqrt{t(T-t)}}{T}z_t'R^{-1}y+\frac{T-t}{2T}y'R^{-1}y.
\end{equation}
Thus, we find that $\mathbb{E}_t^\mathbb{Q}[\exp(z_T'R^{-1}z_T/2)]=$
\begin{multline}
(2\pi)^{-n/2}\det(R)^{-1/2}\exp\bigg(\frac{t}{2T}z_t'R^{-1}z_t\bigg)\\\times\int\limits_{-\infty}^\infty\cdot\cdot\cdot\int\limits_{-\infty}^{\infty}\exp\bigg(-\frac{t}{2T}y'R^{-1}y+\frac{\sqrt{t(T-t)}}{T}z_t'R^{-1}y\bigg)dy_1\cdot\cdot\cdot dy_n.\\
\end{multline}  To evaluate the multiple integral, we use the general formula
\begin{equation}
\int\limits_{-\infty}^\infty\cdot\cdot\cdot\int\limits_{-\infty}^\infty \exp(-y'Ay+\beta'y)dy_1\cdot\cdot\cdot dy_n=\pi^{n/2}\det(A)^{-1/2}\exp(\beta'A^{-1}\beta/4),
\end{equation}where $A$ is any symmetric positive definite $n\times n$ matrix and $\beta=(\beta_1,...,\beta_n)'$ is any vector of constants.  Putting $A:=t/(2T)\cdot R^{-1}$, $\beta:=\sqrt{t(T-t)}\big/T\cdot R^{-1}z_t$, and simplifying, we get
\begin{equation}
\mathbb{E}_t^\mathbb{Q}[\exp(z_T'R^{-1}z_T/2)]=(T/t)^{n/2}\exp(z_t'R^{-1}z_t/2).
\end{equation}
\begin{theorem}
For levered hindsight optimization (over all $b\in\mathbb{R}^n$), the price of Cover's Derivative is
\begin{equation}
\boxed{C(S,t)=(T/t)^{n/2}\exp(rt+z'R^{-1}z/2)=(T/t)^{n/2}\exp(rt+t\cdot b'\Sigma b/2)=(T/t)^{n/2}\cdot V_t^*},
\end{equation}
where $z_i:=\{\log(S_{it}/S_{i0})-(r-\sigma_i^2/2)t\}/(\sigma_i\sqrt{t})$, $b(S,t)$ is the best rebalancing rule in hindsight over $[0,t]$, and $V_t^*$ is the intrinsic value at time $t$.
\end{theorem}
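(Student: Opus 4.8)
The plan is to obtain the price as the discounted risk-neutral conditional expectation of the terminal payoff and then collapse everything onto the multivariate Gaussian integral that has just been assembled. Since the market of $n$ correlated geometric Brownian motions together with the bond is complete, the no-arbitrage price in state $(S,t)$ is $C(S,t)=e^{-r(T-t)}\,\mathbb{E}_t^{\mathbb{Q}}[V_T^*]$, and the Feynman--Kac theorem guarantees that this conditional expectation automatically solves the multivariate Black--Scholes PDE with boundary datum $V_T^*$. Writing $V_T^*=\exp(rT+z_T'R^{-1}z_T/2)$ and pulling the deterministic factor $e^{rT}$ outside the expectation, I would first reduce the problem to $C(S,t)=e^{rt}\,\mathbb{E}_t^{\mathbb{Q}}[\exp(z_T'R^{-1}z_T/2)]$, so that the entire content becomes the evaluation of a single conditional expectation.

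Second, I would compute that expectation using the decomposition $z_T=\sqrt{t/T}\,z_t+\sqrt{1-t/T}\,y$, in which $z_t$ is measurable with respect to the history $\mathcal{F}_t$ while $y\sim N(0,R)$ is the fresh standardized log-return increment over $[t,T]$, independent of $\mathcal{F}_t$ under $\mathbb{Q}$. Substituting this into the quadratic form $z_T'R^{-1}z_T/2$ splits it into a term constant in $y$, a term linear in $y$, and a term quadratic in $y$; integrating $\exp(z_T'R^{-1}z_T/2)$ against the density $f(y)=(2\pi)^{-n/2}\det(R)^{-1/2}\exp(-y'R^{-1}y/2)$ then reduces to a Gaussian integral of the form $\int\exp(-y'Ay+\beta'y)\,dy$ with $A:=\tfrac{t}{2T}R^{-1}$ and $\beta:=\tfrac{\sqrt{t(T-t)}}{T}R^{-1}z_t$. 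Convergence of this integral hinges on $A$ being positive definite, which holds because $R$ is an invertible correlation matrix and hence positive definite; this is the one structural hypothesis I would flag explicitly before invoking the closed form $\int\exp(-y'Ay+\beta'y)\,dy=\pi^{n/2}\det(A)^{-1/2}\exp(\beta'A^{-1}\beta/4)$.

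Third comes the purely algebraic recombination, which is where I expect the only slips to be possible. Using $A^{-1}=\tfrac{2T}{t}R$ together with the symmetry $(R^{-1})'=R^{-1}$, the completed-square exponent becomes $\beta'A^{-1}\beta/4=\tfrac{T-t}{2T}\,z_t'R^{-1}z_t$, which merges with the prefactor exponent $\tfrac{t}{2T}z_t'R^{-1}z_t$ to give exactly $\tfrac12 z_t'R^{-1}z_t$. Meanwhile $\det(A)^{-1/2}=(2T/t)^{n/2}\det(R)^{1/2}$, so the $\det(R)^{\pm1/2}$ factors cancel and the numerical constants collapse to $(2\pi)^{-n/2}\pi^{n/2}(2T/t)^{n/2}=(T/t)^{n/2}$. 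This yields $\mathbb{E}_t^{\mathbb{Q}}[\exp(z_T'R^{-1}z_T/2)]=(T/t)^{n/2}\exp(z_t'R^{-1}z_t/2)$, whence $C(S,t)=(T/t)^{n/2}\exp(rt+z'R^{-1}z/2)$. I would then invoke the identities $V_t^*=\exp(rt+z'R^{-1}z/2)=\exp(rt+t\,b'\Sigma b/2)$ already established for the intrinsic value in order to present the answer in all three advertised forms.

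As a sanity check I would confirm the boundary behavior and the scalar reduction: letting $t\to T$ sends $(T/t)^{n/2}\to1$ and the exponent to $rT+z_T'R^{-1}z_T/2$, recovering $C(S,T)=V_T^*$; and setting $n=1$, $R=[1]$ collapses $z'R^{-1}z$ to $z^2$ and reproduces Theorem 1. The genuine mathematical content is concentrated entirely in the positive-definiteness of $A$ (equivalently, of $R$) that legitimizes the Gaussian integral and its closed form; everything downstream is deterministic matrix bookkeeping. Accordingly, the \textbf{main obstacle} is not conceptual but clerical: keeping the $R$, $R^{-1}$, and determinant factors straight through the completion of the square so that the $\det(R)$ terms cancel cleanly and the two quadratic-form exponents recombine to precisely $\tfrac12 z_t'R^{-1}z_t$.
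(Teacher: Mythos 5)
Your proposal is correct and follows essentially the same route as the paper: both compute $C(S,t)=e^{rt}\,\mathbb{E}_t^{\mathbb{Q}}[\exp(z_T'R^{-1}z_T/2)]$ via the decomposition $z_T=\sqrt{t/T}\,z_t+\sqrt{1-t/T}\,y$ and the Gaussian integral formula with $A:=\tfrac{t}{2T}R^{-1}$ and $\beta:=\tfrac{\sqrt{t(T-t)}}{T}R^{-1}z_t$, and your algebraic recombination of the exponents and determinant factors checks out exactly. The only difference is cosmetic: you make explicit the positive-definiteness of $R$ and the Feynman--Kac justification, which the paper leaves implicit.
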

\begin{theorem}
For the general market with $n$ correlated stocks in geometric Brownian motion, the American-style version of Cover's Derivative (that expires at $T$, has zero exercise price, and pays $V_t^*$ upon exercise at $t$) will never be excercised early in equilibrium.  The American price $C_a(S,t)$ is equal to the European price $C_e(S,t)=(T/t)^{n/2}\cdot V_t^*$.
\end{theorem}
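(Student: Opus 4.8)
The plan is to mirror the single-underlying argument of Theorem 2 almost verbatim, the only new ingredient being the multivariate pricing formula already established. First I would invoke Theorem 6, which supplies the European price $C_e(S,t)=(T/t)^{n/2}\cdot V_t^*$. The decisive observation is that the \emph{cost-of-universality} factor $(T/t)^{n/2}$ strictly exceeds $1$ at every interior time $0<t<T$, since $T/t>1$ and $n\geq 1$. Hence $C_e(S,t)>V_t^*$ throughout the interior of the time domain, exactly as in the scalar case.

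Next I would run the no-arbitrage comparison. Because the American option confers strictly more rights than its European counterpart (the holder may exercise at any $t\leq T$ rather than only at $T$), the absence of arbitrage forces $C_a(S,t)\geq C_e(S,t)$. Chaining this with the strict inequality above yields $C_a(S,t)\geq C_e(S,t)>V_t^*$ for all $0<t<T$. In Merton's (1973) terminology, the option is ``worth more alive than dead'': the exercise value $V_t^*$ can never match the prevailing market price, so a rational holder always prefers to sell the option to a willing buyer (paying at least $C_e>V_t^*$) rather than exercise it. Since early exercise is therefore never optimal in equilibrium, the American optimal-stopping problem collapses onto the terminal payoff and we conclude $C_a(S,t)=C_e(S,t)=(T/t)^{n/2}\cdot V_t^*$.

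To be fully formal, I would verify that this closed form solves the associated variational inequality: the multivariate Black-Scholes operator applied to $C$ should be $\leq 0$, together with the side conditions $C(S,T)=V_T^*$, $C(S,t)\geq V_t^*$, and continuity of the first partials $\partial C/\partial S_i$. The terminal condition and the operator equality are inherited directly from Theorem 6 (the European price already solves the PDE with the correct terminal data, so the operator annihilates $C$ and in particular is $\leq 0$), while $C\geq V_t^*$ follows from $(T/t)^{n/2}\geq 1$. I expect essentially no obstacle here: the argument is structurally identical to the single-underlying case, and the sole substantive input — strict dominance of the European price over intrinsic value — is immediate from $(T/t)^{n/2}>1$. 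The only point deserving a moment's care is the direction of the Black-Scholes inequality in the variational formulation, but since $C_e$ satisfies the PDE with \emph{equality} everywhere, the continuation region is the entire domain, no genuine free boundary ever appears, and the inequality is satisfied trivially.
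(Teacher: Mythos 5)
Your proposal is correct and matches the paper's own argument exactly: the paper's proof is the one-line chain $C_a(S,t)\geq C_e(S,t)=(T/t)^{n/2}\cdot V_t^*>V_t^*$ for $0<t<T$, i.e.\ the option is ``worth more alive than dead,'' just as you argue. Your additional verification of the variational inequality mirrors the formal remark the paper makes only after the single-stock version (Theorem 2), so nothing is missing or different in substance.
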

\begin{proof}
Immediately, we see that the option is ``worth more alive than dead'' on account of the inequalities $C_a(S,t)\geq C_e(S,t)=(T/t)^{n/2}\cdot V_t^*>V_t^*$ for $0<t<T$.
\end{proof}To find the replicating strategy, we again differentiate the price, getting
\begin{equation}
\boxed{\Delta_i:=\frac{\partial C}{\partial S_i}=\frac{C\cdot(R^{-1}z_t)_i}{S_i\sigma_i\sqrt{t}}=\frac{C\cdot b_i(S,t)}{S_i}},
\end{equation}where $(R^{-1}z_t)_i$ is the $i^{th}$ coordinate of the vector $R^{-1}z_t$.  Thus, we have the relation $\Delta_iS_i/C=b_i(S,t)$.
\begin{theorem}
The replicating strategy for Cover's Derivative bets the fraction $b_i(S,t)$ of wealth on stock $i$ in state $(S,t)$.  Thus, to replicate Cover's Derivative, one just uses the best rebalancing rule in hindsight as it is known at time $t$.
\end{theorem}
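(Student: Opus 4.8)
The plan is to recognize that this theorem is simply the financial interpretation of the delta formula $\Delta_i=C\,b_i(S,t)/S_i$ derived immediately above, so that the only work remaining is to confirm that the delta-hedging portfolio genuinely replicates the payoff and then to read off its instantaneous composition. First I would invoke the standard replication result for the multivariate Black-Scholes market: since $C(S,t)=(T/t)^{n/2}\exp(rt+z'R^{-1}z/2)$ solves the multivariate Black-Scholes equation with boundary data $C(S,T)=V_T^*$, the portfolio that holds $\Delta_i=\partial C/\partial S_i$ shares of stock $i$ and places the residual $C-\sum_i\Delta_iS_i$ in the riskless bond is self-financing and has value identically equal to $C$.

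To make this rigorous I would verify the self-financing identity directly. Applying It\^{o}'s lemma to $C(S,t)$ and substituting $dS_i\,dS_j=\rho_{ij}\sigma_i\sigma_jS_iS_j\,dt$ gives
\begin{equation}
dC=\frac{\partial C}{\partial t}dt+\sum_i\Delta_i\,dS_i+\frac12\sum_{i,j}\rho_{ij}\sigma_i\sigma_jS_iS_j\frac{\partial^2C}{\partial S_i\partial S_j}dt.
\end{equation}
The Black-Scholes PDE lets me replace the drift terms, yielding $dC=\big(rC-r\sum_iS_i\Delta_i\big)dt+\sum_i\Delta_i\,dS_i$, which is exactly the increment $\sum_i\Delta_i\,dS_i+r\big(C-\sum_iS_i\Delta_i\big)dt$ of the self-financing portfolio whose bond position funds the stock holdings. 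Hence the portfolio value tracks $C$ at every instant and terminates at $V_T^*$.

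With replication established, the instantaneous fraction of the replicating portfolio's wealth invested in stock $i$ is, by definition, the dollar value of that holding divided by total wealth, namely $\Delta_iS_i/C$. Substituting the Greek $\Delta_i=C\,b_i(S,t)/S_i$ collapses this to $b_i(S,t)$, which is precisely the claim: at each instant the hedge holds stock $i$ in the proportion prescribed by the currently-known best rebalancing rule in hindsight.

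I anticipate that no single step is a genuine obstacle, since the pricing formula and its gradient are already in hand and the remaining work is careful bookkeeping. The one point I would emphasize, to forestall a conceptual error, is that the coincidence is instantaneous only: the weight vector $b(S,t)$ is itself a function of the evolving state, so the replicating strategy is \emph{not} a fixed-fraction rule but a continuously-updated rebalancing rule that chases the moving target $b(S,t)$. This is exactly the continuous-time analogue of Cover's universal portfolio, and it is what allows the hedge to match the asymptotic growth rate of the best rule in hindsight despite never knowing that rule in advance.
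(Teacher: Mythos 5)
Your proposal is correct and follows essentially the same route as the paper, which simply differentiates the pricing formula to obtain $\Delta_i=C\cdot b_i(S,t)/S_i$ and reads off the weight $\Delta_iS_i/C=b_i(S,t)$. The only difference is that you make explicit the standard self-financing/It\^{o} verification that the delta-hedge actually replicates the payoff, a step the paper leaves implicit; that is sound bookkeeping rather than a different argument.
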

For the general stock market, we have again concluded that the following three trading strategies are identical:
\begin{enumerate}
\item
The strategy that looks back over the known price history $[0,t]$, finds the best continuously-rebalanced portfolio in hindsight, and uses that portfolio over the interval $[t,t+dt]$
\item
The $\Delta$-hedging strategy induced by Cover's Derivative
\item
The natural estimator $\Sigma^{-1}(\hat{\mu}-r\textbf{1})$ of the continuous-time Kelly rule $\Sigma^{-1}(\mu-r\textbf{1})$ (cf. Luenberger 1998).
\end{enumerate}
\section{Simulations}
We proceed to give three simulations that help visualize the behavior of the replicating strategy over $T:=200$ years under a risk-free rate of $r:=0.02$.  We let $\nu_i:=\mu_i-\sigma^2_i/2$ denote the compound-annual growth rate of stock $i$, and we normalize the initial stock prices to $S_{i0}:=1$.  We also normalize the trader's initial wealth to $\$1$.  Simulations 1 and 2 deal with the univariate case.  For the first 5 years of the experiment, the trader holds a single share of the stock.  Then at $t=5,$ he puts all his money into Cover's Derivative.  The waiting period is necessary because $C\to+\infty$ as $t\to0^+$.  Thus, for $t\leq5$ the trader's wealth is $S_t$, and for $t\geq5$ his wealth is $S_5C(S_t,t)/C(S_5,5)$.  
\subsubsection*{Simulation 1}
We put $\nu:=0.04$ and $\sigma:=0.7$.  The Kelly growth rate (Luenberger 1998) for this market is $9.17\%$ and the Kelly bet is $b^*=0.54$.  The replicating strategy learns to hold significant cash balances and ``live off the fluctuations,'' which are substantial on account of the $70\%$ annual volatility.  Figure \ref{fig:sim1} gives a sample path.
\begin{figure}[t]
\centering
\includegraphics[width=450px]{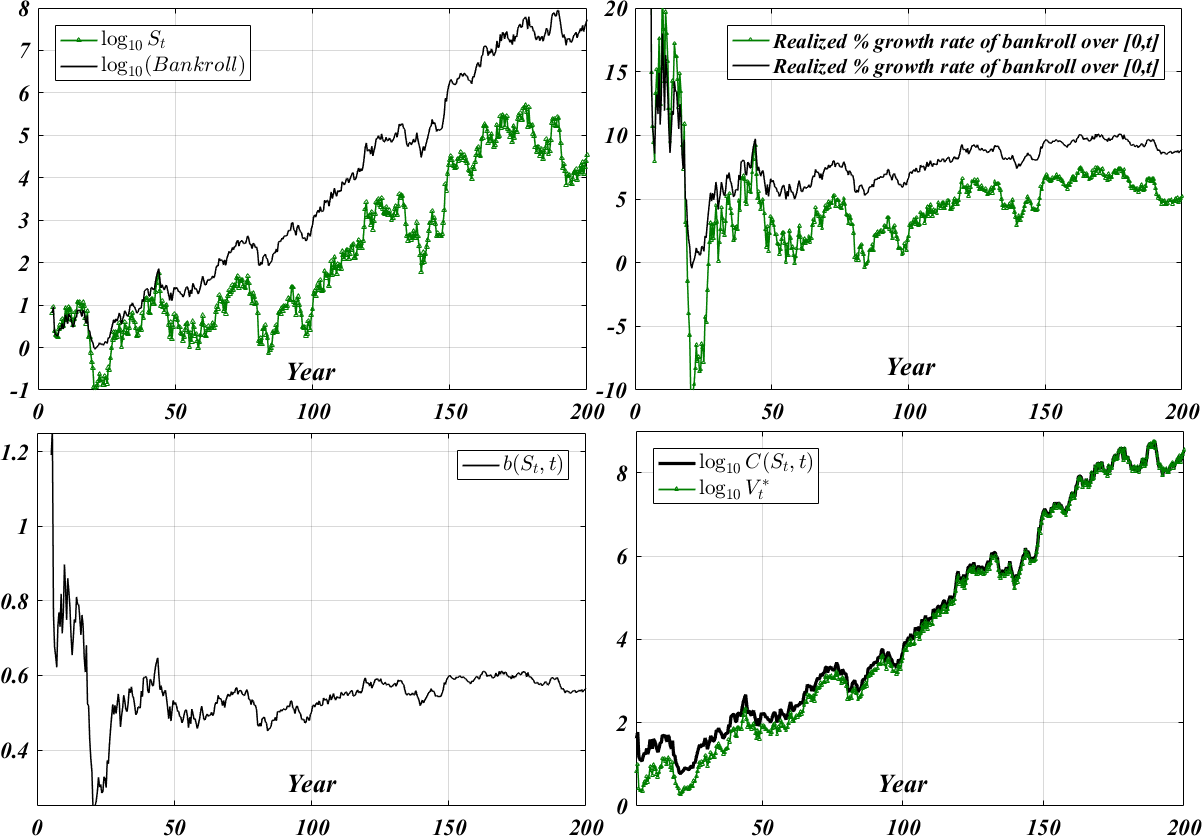}
\caption{\sc{Outcome for $\nu:=0.04,\sigma:=0.7.$}}
\label{fig:sim1}
\end{figure}
\subsubsection*{Simulation 2}
Next, we use $\nu:=0.08$ and $\sigma:=0.17$.  The Kelly growth rate is $11.6\%$ and the Kelly bet is $b^*=2.57$.  The replicating strategy uses enormous leverage in an effort to exploit low interest rates and the favorable risk/return profile.  This is Figure \ref{fig:sim2}.  After 200 years, the stock price has appreciated from $\$1$ a share to $\$100$ million a share, but the replicating strategy has grown the initial dollar into $\$1$ trillion.
\begin{figure}[t]
\centering
\includegraphics[width=450px]{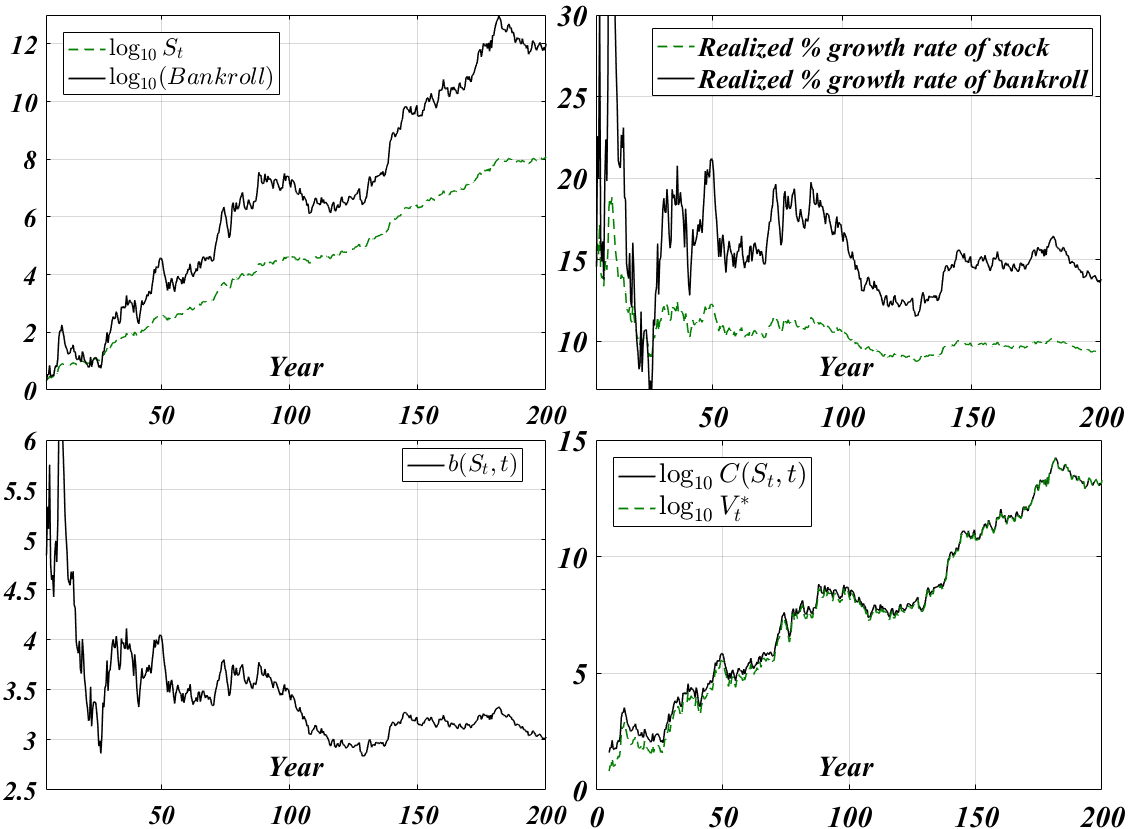}
\caption{\sc Outcome for $\nu:=0.08, \sigma:=0.17.$}
\label{fig:sim2}
\end{figure}
\subsubsection*{Simulation 3}
Finally, we simulate the bivariate case.  At $t=0$, the trader puts $\$0.50$ into each stock.  He holds this portfolio for 5 years, and then he puts all his money into Cover's Derivative.  Thus, for $t\leq5$ his wealth is $0.5[S_{1}(t)+S_{2}(t)]$, and for $t\geq5$ his wealth is $0.5[S_{1}(5)+S_{2}(5)]C(S_t,t)/C(S_5,5)$.
\par
We use $\nu:=(0.03,0.08)'$ and $\sigma:=(0.55,0.7)'$, with $\rho:=0.2$ being the correlation of instantaneous returns.  The Kelly growth rate is $13.7\%$ and the Kelly fractions are $b^*=(0.39,0.56)'$.  Figure \ref{fig:sim3} gives the result.  On this particular sample path, the replicating strategy uses leverage for decades on end, in spite of the fact that a Kelly gambler would continuously hold $5\%$ of wealth in cash.
\begin{figure}[t]
\centering
\includegraphics[width=450px]{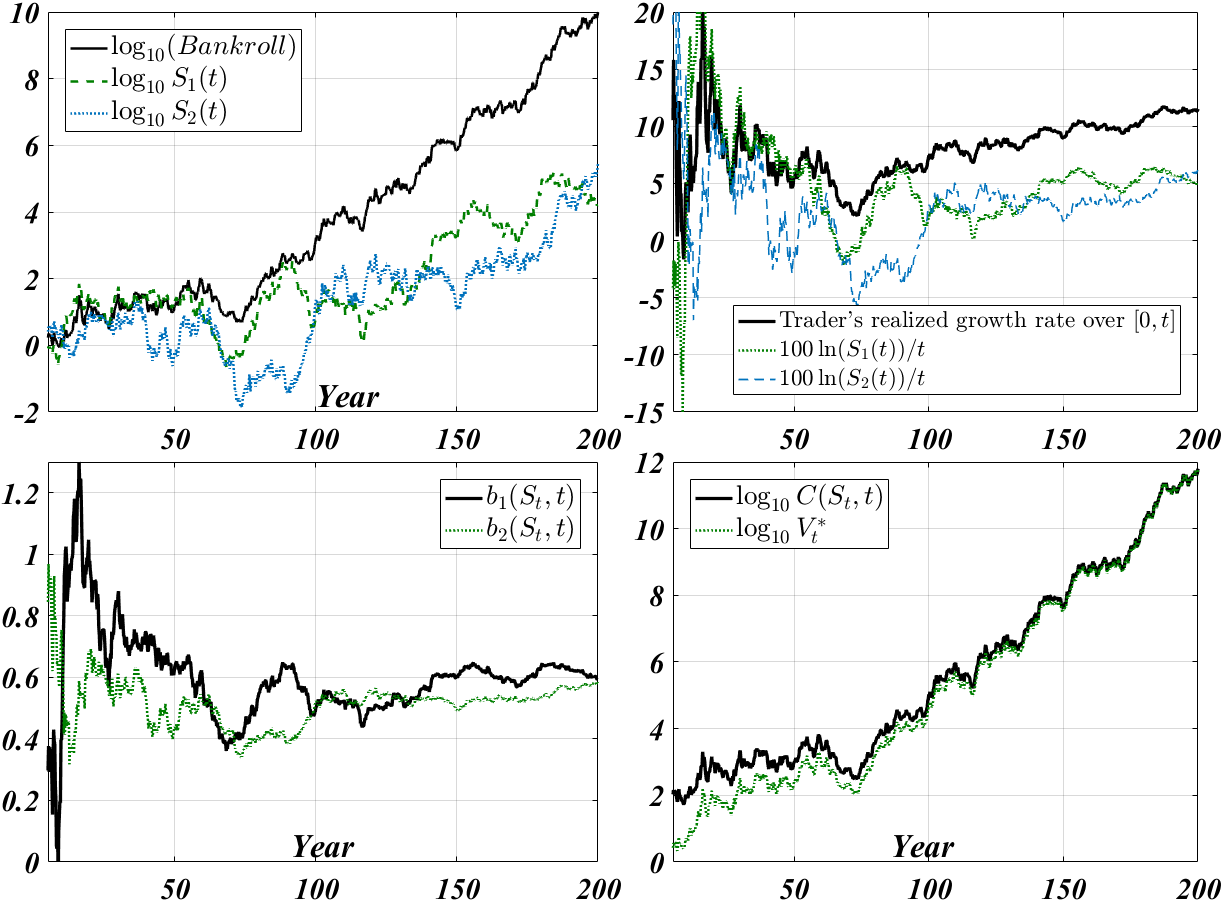}
\caption{\sc Outcome for $\nu:=(0.03,0.08)', \,\sigma:=(0.55,0.7)', \,\rho:=0.2.$}
\label{fig:sim3}
\end{figure}
\section{Limitations and Constraints}
To close the paper, we briefly review the practical limitations of our framework and main results.  First, our entire analysis resides squarely within the Black-Scholes complete market consisting of a risk-free bond and several correlated stocks in geometric Brownian motion.  Accordingly, we have operated under the classical assumption of continuous trading in a frictionless environment that is free of taxes, transaction costs, and bid-ask spreads.  Importantly, we have assumed that one can operate without any institutional constraints on leverage or value at risk.  A potentially unlimited supply of margin loans is presumed to be available at the risk-free rate, and all securities can be sold short with full use of the proceeds.  In accordance with the Kelly theory of asymptotic capital growth, the trader is willing to stomach any level of volatility or short-term drawdown in service of achieving the optimum asymptotic growth rate.  As far as actual praxis on behalf of long lived institutions (say, sovereign wealth funds or university endowments), we have not modelled or simulated the behavior of our investment strategy in the presence of ongoing deposits and withdrawals. 
\par
Finally, we make the technical note that in certain (mathematically degenerate) situations, our ``beat the market asymptotically'' slogan can turn into ``tie the market asymptotically.''  For, if the best rebalancing rule in hindsight over $[0,T]$ amounts to buying and holding one of the stocks (e.g. if $b(S_T,T)$ is a unit basis vector), then the compound growth rate of the practitioner will lag the best performing stock in the market (in the one-underlying case, the market itself) by an amount that becomes vanishingly small as $T\to\infty$.  This finite-sample growth rate lag is precisely the ``cost of universality,'' that is, the cost of having to learn the growth optimal rebalancing rule on-the-fly.
\section{Conclusion}
This paper priced and replicated an exotic option (``Cover's Derivative'') whose payoff equals the final wealth that would have accrued to a $\$1$ deposit into the best leveraged, continuously-rebalanced portfolio in hindsight.  A rebalancing rule is a fixed-fraction betting scheme that trades continuously so as to maintain a target proportion $b_i$ of wealth in each stock $i$.  For the Black-Scholes market with $n$ correlated stocks in geometric Brownian motion, the no-arbitrage price of Cover's rebalancing option is $C(S,t)=(T/t)^{n/2}\exp(rt+t\cdot b'\Sigma b/2)$, where $b=b(S,t)$ is the best rebalancing rule in hindsight over $[0,t]$ and $\Sigma$ is the covariance of instantaneous returns per unit time.  Since $C$ is equal to $(T/t)^{n/2}$ times intrinsic value, the American-style version of Cover's Derivative will never be exercised early in equilibrium because  the option is "worth more alive than dead."
\par
The order of magnitude $C(S,t;T)=\mathcal{O}(T^{n/2})$ agrees with the super-replicating price derived by Cover in his discrete-time universal portfolio theory.  A sophisticated, long-lived institution that puts money into the replicating strategy (a strategy which turns out to be \textit{horizon-free}) will grow its endowment at the same asymptotic rate as the best levered rebalancing rule in hindsight.  In the long-run, with probability approaching 1, it will beat the market averages by an exponential factor.  Of course, this guarantee is subject to the proviso that the best levered rebalancing rule in hindsight must sustain a higher asymptotic growth rate than the market index.
\par
The replicating strategy amounts to betting the fraction $b_i(S,t)$ of wealth on stock $i$ at time $t$, where $b(S,t)$ is the best rebalancing rule in hindsight over the currently known price history.  If someone knows the covariance $\Sigma$ of instantaneous returns (but not necessarily the drifts of the various stocks), he can use the formula $b(S,t)=M^{-1}R^{-1}z/\sqrt{t}$, where $R$ is the correlation matrix of instantaneous returns, $M:=\text{diag}(\sigma_1,...,\sigma_n)$ is the (diagonal) matrix of volatilities, and $z_i:=\{\log(S_{it}/S_{i0})-(r-\sigma_i^2/2)t\}/(\sigma_i\sqrt{t})$.  But even if he is ignorant of $\Sigma$, he can still find $b(S,t)$ at any given time by hindsight-optimizing over the known price history.
\par
Another expression for the replicating strategy is $b(S,t)=\Sigma^{-1}(\hat{\mu}-r\textbf{1})$, where $\hat{\mu}_i:=\log(S_{it}/S_{i0})/t+\sigma_i^2/2$ is the natural estimator of the drift of stock $i$.  The replicating strategy converges in mean square to the continuous-time Kelly rule, $b^*:=\Sigma^{-1}(\mu-r\textbf{1})$, and its realized compound-growth rate converges to the Kelly (1956) optimum asymptotic growth rate, which is $\gamma^*:=r+(1/2)(\mu-r\textbf{1})'\Sigma^{-1}(\mu-r\textbf{1})$.  This happens because the intrinsic value of Cover's Derivative grows at an asymptotic rate of $\gamma^*$ per unit time.  A $\$1$ deposit into the replicating strategy at time $t$ guarantees that the trader will achieve, at $T$, the \textit{deterministic} fraction $V_T^*/C(S_t,t)$ of the final wealth of the best rebalancing rule in hindsight.  The excess continuously-compounded growth rate of $V_T^*$ (over and above that of the replicating strategy) is at most $\{rt+z_t'R^{-1}z_t/2+n\log(T/t)/2\}/(T-t)$, which tends to $0$ as $T\to\infty$.
\section*{Acknowledgment}
I thank Erik Ordentlich and Thomas Cover for their timeless paper, \textit{The Cost of Achieving the Best Portfolio in Hindsight}, which I found uplifting to the spirit.

\end{document}